\documentclass[sigconf]{acmart}
\copyrightyear{2018} 
\acmYear{2018} 
\setcopyright{rightsretained}
\acmConference[PODC'18]{ACM Symposium on Principles of Distributed Computing}{July 23--27, 2018}{Egham, United Kingdom}
\acmDOI{10.1145/3212734.3212736}
\acmISBN{978-1-4503-5795-1/18/07} 

\fancyhead{}
\settopmatter{printacmref=false, printfolios=false}

\usepackage{listings}
\usepackage{url}
\usepackage{amsmath, amsthm, amssymb}
\lstset{language=Java,columns=flexible,numbers=left,numberstyle=\tiny,showstringspaces=false,basicstyle=\sffamily}
\lstset{escapechar=`}
\lstset{morekeywords={
	function, then, fi, od, yield}}

\newcommand{\var}[1]{\text{\lstinline+#1+}}

      \def\cC{\ensuremath{{\mathcal C}}}   \def\cD{\ensuremath{{\mathcal D}}}
\def\cE{\ensuremath{{\mathcal E}}}      \def\cG{\ensuremath{{\mathcal G}}}


\newcommand{\set}[1]{\left\{ #1 \right\}}

\newcommand{\sig}{\ensuremath{\mathit{sig}}}

\newcommand{\diam}{\ensuremath{\mathit{diam}}}

\usepackage{epsfig}


\newcommand{\figlabel}[1]{\label{figure:#1}}
\newcommand{\nakedfigref}[1]{\ref{figure:#1}}

\newcommand{\Figref}[1]{Figure~\nakedfigref{#1}}

\newcommand{\lemmalabel}[1]{\label{lemma:#1}}
\newcommand{\nakedlemmaref}[1]{\ref{lemma:#1}}
\newcommand{\lemmaref}[1]{Lemma~\nakedlemmaref{#1}}

\newcommand{\thmlabel}[1]{\label{thm:#1}}
\newcommand{\nakedthmref}[1]{\ref{thm:#1}}
\newcommand{\thmref}[1]{Theorem~\nakedthmref{#1}}

\newcommand{\corlabel}[1]{\label{cor:#1}}

\newcommand{\nakedlineref}[1]{\ref{line:#1}}
\newcommand{\lineref}[1]{Line~\nakedlineref{#1}}
\newcommand{\linerefrange}[2]{Lines~\nakedlineref{#1}-\nakedlineref{#2}}

\newcommand{\kFreeride}{\textsc{FreeRide}}
\newcommand{\kDiscount}{\textsc{Discount}}
\newcommand{\kDeal}{\textsc{Deal}}
\newcommand{\kNodeal}{\textsc{NoDeal}}
\newcommand{\kUnderwater}{\textsc{Underwater}}
\newcommand{\bbP}{\ensuremath{\mathbb{P}}}

\graphicspath{{./pictures/}}

\title{Atomic Cross-Chain Swaps}
\author{Maurice Herlihy}
\affiliation{%
  \department{Computer Science Department}
  \institution{Brown University}
  \city{Providence}
  \state{Rhode Island}
  \postcode{02912}
}

\email{maurice.herlihy@gmail.com}
\begin{abstract}
An \emph{atomic cross-chain swap} is a distributed coordination task where multiple parties exchange assets across multiple blockchains, for example, trading bitcoin for ether.

An \emph{atomic} swap protocol guarantees (1) if all parties conform to the protocol, then all swaps take place, (2) if some coalition deviates from the protocol, then no conforming party ends up worse off, and (3) no coalition has an incentive to deviate from the protocol.

A cross-chain swap is modeled as a directed graph $\cD$, whose vertexes are parties and whose arcs are proposed asset transfers. For any pair $(\cD,L)$, where $\cD = (V,A)$ is a strongly-connected directed graph and $L \subset V$ a feedback vertex set for $\cD$, we give an atomic cross-chain swap protocol for $\cD$, using a form of hashed timelock contracts, where the vertexes in $L$ generate the hashlocked secrets. We show that no such protocol is possible if $\cD$ is not strongly connected, or if $\cD$ is strongly connected but $L$ is not a feedback vertex set. The protocol has time complexity $O(\diam(\cD))$ and space complexity (bits stored on all blockchains) $O(|A|^2)$.  
\end{abstract}

\begin{document}
\maketitle

\section{Motivation}
Carol wants to sell her Cadillac for bitcoins.
Alice is willing to buy Carol's Cadillac,
but she wants to pay in an ``alt-coin'' cryptocurrency.
Fortunately, Bob is willing to trade alt-coins for bitcoins.
Alice, Bob, and Carol need to arrange a three-way swap:
Alice will transfer her alt-coins to Bob,
Bob will transfer his bitcoins to Carol,
and Carol will transfer title of her Cadillac to
Alice\footnote{Naturally, they live in a state that records automobile
titles in a blockchain.}.
Of course, no one trusts anyone else.
How can we devise a protocol that ensures
that if all parties behave rationally,
in his or her own self-interest,
then all assets are exchanged,
but if some parties behave irrationally,
then no rational party will end up worse off?

In many blockchains,
assets are transferred under the control of so-called \emph{smart contracts}
(or just \emph{contracts}),
scripts published on the blockchain that establish and enforce
conditions necessary to transfer an asset from one party to another.
For example, let $H(\cdot)$ be a cryptographic hash function.
Alice might place her alt-coins in escrow by publishing on the alt-coin
blockchain a smart contract with \emph{hashlock} $h$ and \emph{timelock} $t$.
Hashlock $h$ means that if Bob sends the contract a value $s$,
called a \emph{secret},
such that $h = H(s)$,
then the contract irrevocably transfers ownership of those
alt-coins from Alice to Bob.
Timelock $t$ means that if Bob fails to produce that
secret before time $t$ elapses,
then the escrowed alt-coins are refunded to Alice.

\begin{figure*}[htb]
  \includegraphics[width=0.8 \hsize]{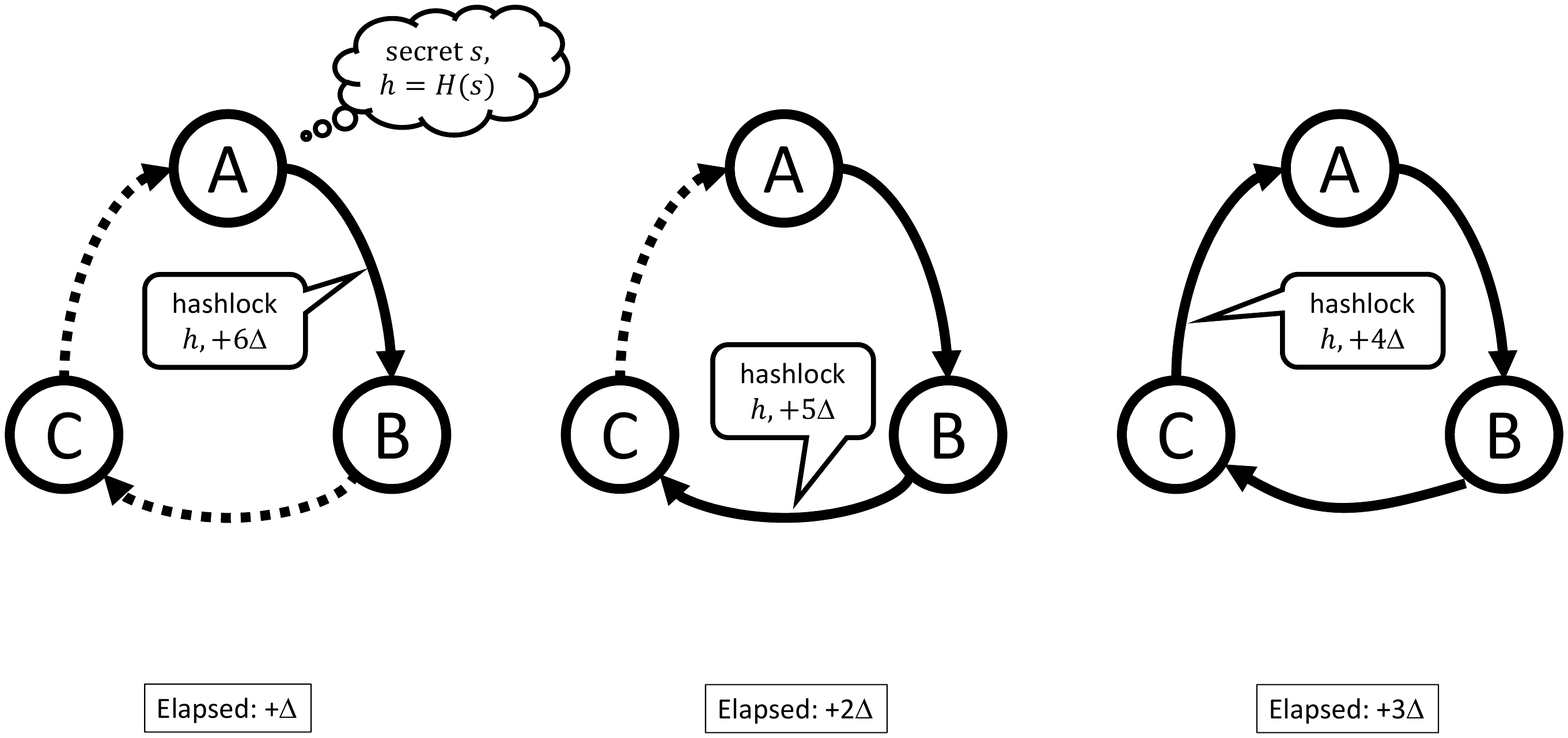}
  \caption{Atomic cross-chain swap: deploying contracts}
  \label{fig:phase1}
  \includegraphics[width=0.8 \hsize]{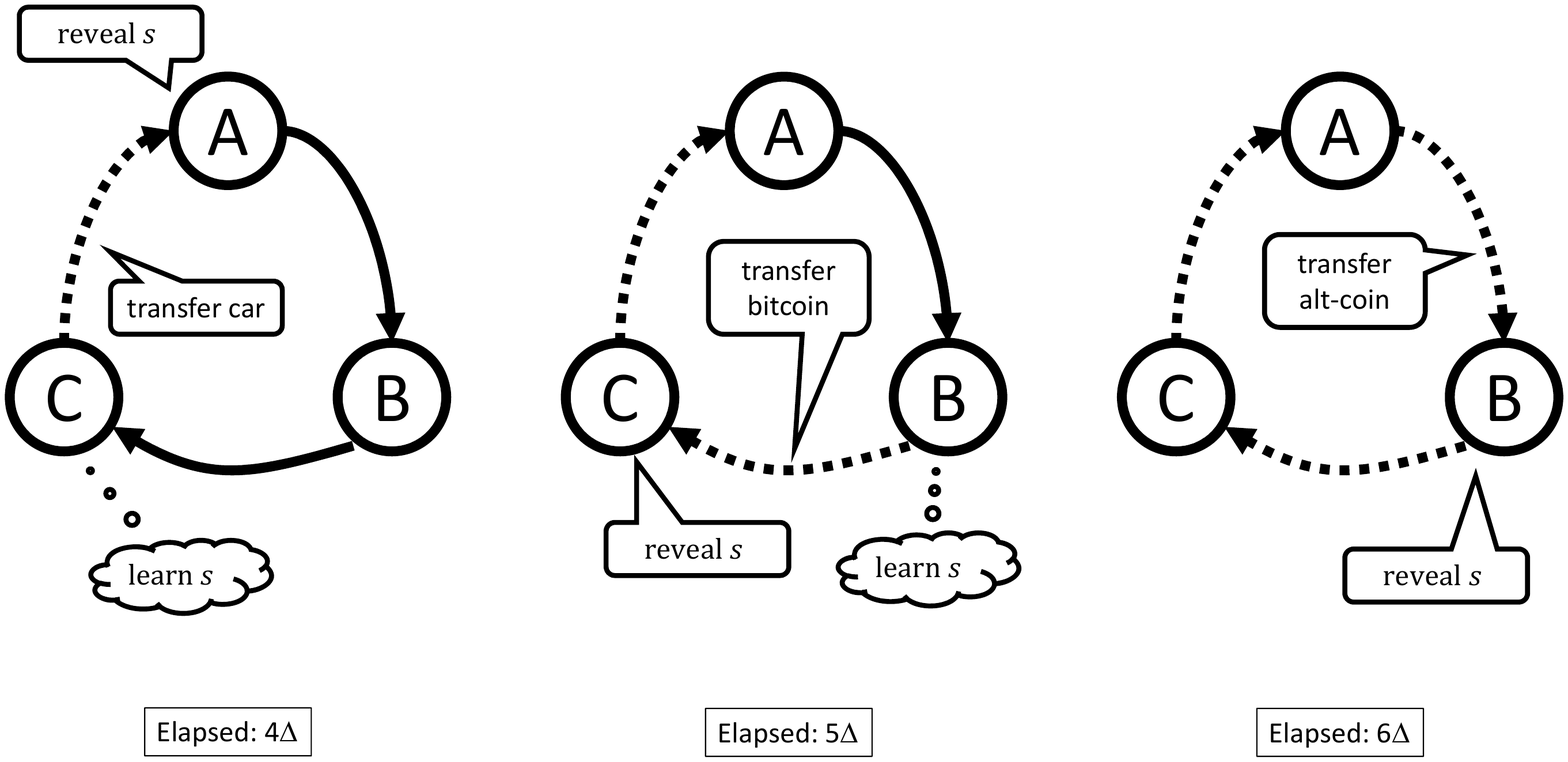}
  \caption{Atomic cross-chain swap: triggering arcs}
  \label{fig:phase2}
\end{figure*}
Here is a simple protocol for Alice, Bob, and Carol's three-way swap,
illustrated in Figures~\ref{fig:phase1} and~\ref{fig:phase2}.
Let $\Delta$ be enough time for one party to publish a smart contract
on any of the blockchains, or to change a contract's state,
and for the other party to detect the change.

\begin{itemize}
\item 
Alice creates a secret $s$, $h = H(s)$,
and publishes a contract on the alt-coin blockchain with hashlock $h$
and timelock $6\Delta$ in the future, to transfer her alt-coins to Bob.

\item
When Bob confirms that Alice's contract has been published on the
alt-coin blockchain,
he publishes a contract on the Bitcoin blockchain with the same
hashlock $h$ but with timelock $5\Delta$ in the future,
to transfer his bitcoins to Carol.

\item
When Carol confirms that Bob's contract has been published on the Bitcoin blockchain,
she publishes a contract on the automobile title blockchain with the
same hashlock $h$, but with timeout $4\Delta$ in the future,
to transfer her Cadillac's title to Alice.

\item
When Alice confirms that Carol's contract has been published on the
title blockchain, she sends $s$ to Carol's contract,
acquiring the title and revealing $s$ to Carol.

\item
Carol then sends $s$ to Bob's contract,
acquiring the bitcoins and revealing $s$ to Bob.

\item
Bob sends $s$ to Alice's contract,
acquiring the alt-coins and completing the swap.  
\end{itemize}
What could go wrong?
If any party halts while contracts are being deployed,
then all contracts eventually time out and trigger refunds.
If any party halts while contracts are being triggered,
then only that party ends up worse off.
For example,
if Carol halts without triggering her contract,
then Alice gets the Cadillac and Bob gets a refund,
so Carol's misbehavior harms only herself.

The order in which contracts are deployed matters.
If Carol were to post her contract with Alice
before Bob posts his contract with Carol,
then Alice could take ownership of the Cadillac without paying Carol.

Timelock values matter.
If Carol's contract with Bob were to expire at the same time as Bob's
contract with Alice,
then Carol could reveal $s$ to collect Bob's bitcoins at the very last moment,
leaving Bob no time to collect his alt-coins from Alice.

What if parties behave irrationally?
If Alice (irrationally) reveals $s$ before the first phase completes,
then Bob can take Alice's alt-coins,
and perhaps Carol can take Bob's bitcoins,
but Alice will not get her Cadillac,
so only she is worse off.

A \emph{atomic swap protocol} guarantees
(1) if all parties conform to the protocol, then all swaps take place,
(2) if some parties deviate from the protocol,
then no conforming party ends up
worse off\footnote{Other than the inconvenience of having assets temporarily
locked up.},
and (3) no coalition has an incentive to deviate from the protocol.
Alice, Bob, and Carol's swapping adventure suggests broader questions:
when are atomic cross-chain swaps possible,
how can we implement them,
and what do they cost?

While swapping digital assets is the immediate motivation for this study,
atomic cross-chain swap protocols have other possible applications.
\emph{Sharding}~\cite{sharding} splits one blockchain into many for
better load-balancing and scalability. 
Most of the time,
activities on different shards proceed independently.
When they cannot,
an atomic swap protocol can coordinate needed cross-chain updates.
In a decentralized distributed system,
\emph{upgrades} from one software version to another,
or from one data schema to another,
could benefit from atomic cross-chain swaps.
An atomic swap protocol can be thought of as a trust-free,
Byzantine-hardened form of \emph{distributed commitment}~\cite{twophasecommit}.
An atomic cross-chain swap is a special case of a
\emph{distributed atomic transaction}~\cite{WeikumV2001},
although not all atomic transactions can be expressed as cross-chain swaps.

Cross-chain swaps are well-known to the blockchain
community~\cite{bitcoinwiki,bip199,decred,tiersnolan,barterdex,Catalyst},
but to our knowledge,
this is the first systematic analysis of the theory underlying such protocols.
We make the following contributions.
A cross-chain swap is modeled as a directed graph (digraph) $\cD$,
whose vertexes are parties and whose arcs are proposed asset transfers.
For any pair $(\cD,L)$,
where $\cD = (V,A)$ is a \emph{strongly-connected} digraph
and $L \subset V$ a \emph{feedback vertex set} for $\cD$,
we give an atomic cross-chain swap protocol
using a form of hashed timelock contracts,
where the vertexes in $L$, called \emph{leaders},
generate the hashlocked secrets.
(Vertexes that are not leaders are \emph{followers}.)
We also show that no such protocol is possible if $\cD$ is not strongly connected,
or if $\cD$ is strongly connected
but the set of leaders $L$ is not a feedback vertex set.
The protocol has time complexity $O(\diam(\cD))$ and communication
complexity (bits published on blockchains) $O(|A| \cdot |L|)$.

\section{Model}
\subsection{Digraphs}
A \emph{directed graph} (or \emph{digraph}) $\cD$ is a pair $(V,A)$,
where $V$ is a finite set of \emph{vertexes},
and $A$ is a finite set of ordered pairs of distinct vertexes called
\emph{arcs}.
We use $V(\cD)$ for $\cD$'s set of vertexes,
and $A(\cD)$ for its set of arcs.
An arc $(u,v)$ has \emph{head} $u$ and \emph{tail} $v$.
An arc \emph{leaves} its head and \emph{enters} its tail.
An arc $(u,v)$ \emph{enters} a set of vertexes
$W \subseteq V$ if $u \not\in W$ and $v \in W$,
and similarly for leaving.

A digraph $\cC$ is a \emph{subdigraph} of $\cD$ if
$V(\cC) \subseteq V(\cD), A(\cC) \subseteq A(\cD)$
and every arc in $A(\cC)$ has both its head and tail in $V(\cC)$.

A \emph{path} $p$ in $\cD$ is a sequence of vertexes
$(u_0, \ldots, u_\ell)$ such that $u_0, \ldots, u_{\ell-1}$ are distinct.
Path $p$ has \emph{length} $\ell$, denoted by $|p|$.
If $v$ is a vertex,
and $(u_0, \ldots, u_\ell)$ a path that does not include the arc $(v,u_0)$,
then $v+p$ denotes the path $(v, u_0, \ldots, u_\ell)$.
For vertexes $u,v$,
$D(u,v)$ is the length of the longest path from $u$ to $v$ in $\cD$.

A path $(u_0, \ldots, u_\ell)$ is a \emph{cycle} if $u_0 = u_\ell$.
A digraph is \emph{acyclic} if it has no cycles.
Vertex $v$ is \emph{reachable} from vertex $u$ if there is a path from
$u$ to $v$.
$\cD$'s \emph{diameter} $\diam(\cD)$ is the length of
the longest path from any vertex to any other.
$\cD$ is \emph{connected} if its underlying graph is connected,
and \emph{strongly connected}
if, for every pair $u, v$ of distinct vertexes in $\cD$,
$u$ is reachable from $v$ and $v$ is reachable from $u$.
A \emph{feedback vertex set} is a subset of $V$
whose deletion leaves $\cD$ acyclic.

The \emph{transpose} $\cD^T$ is the digraph obtained
from $\cD$ by reversing all arcs.
If $\cD$ is strongly connected, so is $\cD^T$,
and any feedback vertex set for $\cD$ is also
a feedback vertex set for $\cD^T$.

\subsection{Blockchains and Smart Contracts}
For our purposes,
a \emph{blockchain} is a distributed service that allows clients to publish
transactions to a publicly-readable, tamper-proof distributed ledger.
Our analysis is independent of the particular blockchain algorithm.
We assume a timing model where there is a known duration $\Delta$ long
enough for one party to publish a contract to a blockchain,
and for a second party to confirm that the contract has been published.

The owner of an asset 
(such as a unit of cryptocurrency or an automobile title)
can create a smart contract to transfer ownership of that asset
to a \emph{counterparty} if specified conditions are met.
A contract is \emph{published} when its creator places it on a
blockchain ledger.
Once a contract is published,
it is irrevocable: neither the contract's creator
nor any other party can remove the contract nor tamper with its terms.

A \emph{rational} party acts in its own self-interest,
deviating from a protocol only if it is profitable to do so.
Rational parties can collude with one another to disadvantage other parties.
An \emph{irrational} party may deviate from a protocol even if it is
not profitable to do so.
Parties may behave irrationally out of spite,
because they were hacked,
or because they profit in ways not foreseen by the protocol designers.

Blockchain protocols typically require parties to have public and private keys.
We use $\sig(x,v)$ to denote the result of $v$ using its private key to sign $x$.

\section{Swap Digraphs and Games}
A cross-chain swap is given by a digraph $\cD = (V,A)$,
where each vertex in $V$ represents a party,
and each arc in $A$ represents a proposed asset transfer from the
arc's head to its tail via a shared blockchain.
(We assume without loss of generality that $\cD$ is connected,
because a disconnected digraph can be treated as multiple swaps.)
Henceforth, we use \emph{party} and \emph{vertex},
\emph{blockchain} and \emph{arc}, interchangeably,
depending on whether we emphasize roles or digraph structure.

In the terminology of game theory,
a swap $\cD$ is a \emph{cooperative game},
organized so that if all parties follow the protocol,
each transfer on each arc happens.
Each possible outcome is given by a subdigraph $\cE = (V,A')$ of $\cD$.
If a proposed transfer $(u,v) \in A$ is also in $(u,v) \in A'$,
then that transfer happened.
For short, we say arc $(u,v)$ was \emph{triggered}.

A protocol is a \emph{strategy} for playing a game:
a set of rules that determines which step a party
takes at any stage of a game.
To model real-world situations where multiple parties are secretly
controlled by a single adversary,  
the swap game is \emph{cooperative}:
parties can form \emph{coalitions} where coalition members
commit to a common strategy.

\begin{figure}[htb]
  \centering
  \includegraphics[width=\columnwidth]{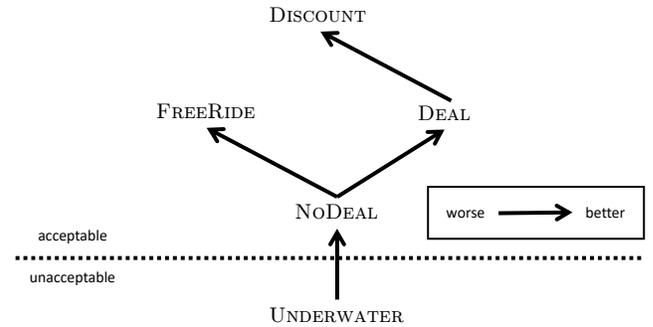}
  \caption{Partial order of protocol outcomes}
  \label{fig:outcomes}
\end{figure}
Here are the outcomes for a party $v$, organized into classes,
For brevity, each class has a shorthand name.
\begin{itemize}
\item
The party acquires assets without paying:
at least one arc entering $v$ is triggered,
but no arc leaving $v$ is triggered (\kFreeride).
  
\item
The party acquires assets while paying less than expected:
all arcs entering $v$ are triggered,
but at least one arc leaving $v$ is not triggered (\kDiscount).

\item
The party swaps assets as expected:
all arcs entering and leaving $v$ are triggered (\kDeal).

\item
No assets change hands:
no arc entering or leaving $v$ is triggered (\kNodeal).

\item
The party pays without acquiring all expected assets:
at least one arc entering $v$ is not triggered,
and at least one arc leaving $v$ is triggered
(\kUnderwater).  
\end{itemize}
Payoffs for a coalition $C \subset V$ are defined by replacing $v$
with $C$ in the definitions above.

The protocol design incorporates certain conservative assumptions
about parties' preferences. 
The protocol's preferred outcome is for all conforming parties to end with
outcome \kDeal{}.
In the presence of failures or deviation, however,
it is acceptable for conforming parties to end with outcome \kNodeal,
the \emph{status quo}, rendering them no worse off.
Furthermore,
each party is assumed to prefer \kDeal{} to \kNodeal{},
because otherwise it would not have agreed to the swap in the first place.
It follows that each party prefers any \kFreeride{} outcome to \kNodeal{},
because it acquires additional assets ``for free'',
without relinquishing any assets of its own.
Similarly,
each party prefers any \kDiscount{} outcome to \kDeal{},
since that party acquires the same assets in both outcomes,
but relinquishes fewer in \kDiscount{} outcomes.
For these reasons, \kDeal, \kNodeal, \kDiscount, and \kFreeride{} are
all considered acceptable outcomes for conforming parties if the
protocol execution is unable to complete because of failures or
adversarial behavior.

We consider the remaining \kUnderwater{} outcomes to be unacceptable
to conforming parties.
It is possible that in some idiosyncratic cases,
a party may actually prefer particular \kUnderwater{} outcomes to
\kNodeal{}.
For example,
a party with three entering arcs and one leaving arc
may be be willing to relinquish its asset in return for acquiring
only two out of three of the entering arcs' assets.
We leave the design of protocols that make such fine distinctions
to future work.

\begin{definition}
A swap protocol \bbP{} is \emph{uniform} if it satisfies:
\begin{itemize}
\item
If all parties follow \bbP{},
they all finish with payoff \kDeal.

\item
If any coalition cooperatively deviates from \bbP{},
no conforming party finishes with payoff \kUnderwater.
\end{itemize}
\end{definition}

A uniform protocol is not useful if rational parties will not follow it.
A swap protocol is a \emph{strong Nash equilibrium strategy} if no
coalition improves its payoff when its members cooperatively deviate
from that protocol.

\begin{definition}
A swap protocol \bbP{} is \emph{atomic} if it is both uniform and
a strong Nash equilibrium strategy.  
\end{definition}
This definition formalizes the notion that if all parties are rational,
all swaps happen,
but if some parties are irrational,
the rational parties will never end up worse off.
Recall that a \emph{conforming} party follows the protocol,
while a \emph{deviating} party does not.

\begin{lemma}
\lemmalabel{if}
If $\cD$ is strongly connected,
then any uniform swap protocol \bbP{} is atomic.
\end{lemma}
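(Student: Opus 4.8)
The plan is as follows. Since \bbP{} is assumed uniform, atomicity reduces to showing that \bbP{} is a strong Nash equilibrium, i.e.\ that no coalition can strictly improve its payoff by cooperatively deviating. I would argue by contradiction: suppose some coalition $C$ deviates while every party in $V \setminus C$ conforms, and suppose this deviation yields $C$ a payoff strictly better than the one it obtains by conforming. By the first uniformity condition, if $C$ also conforms then all parties, and hence $C$ viewed as a coalition, finish with \kDeal{}; so the hypothesized profitable deviation must leave $C$ with an outcome strictly preferred to \kDeal{}.

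Next I would pin down the shape of such an outcome. Reading a coalition's payoff as the pair (incoming arcs triggered, outgoing arcs triggered) and using the conservative preference order described above---acquiring more assets is better, relinquishing fewer assets is better---the only way to strictly beat \kDeal{} is to acquire no fewer assets while relinquishing strictly fewer. Since \kDeal{} already has every arc entering $C$ triggered, this forces \emph{every} arc entering $C$ to be triggered and \emph{at least one} arc leaving $C$ to be untriggered, which is precisely a \kDiscount{} outcome for $C$. The degenerate cases $C = \emptyset$ and $C = V$ are immediate, since then $C$ has no boundary arcs and its payoff is fixed, so I may assume $\emptyset \neq C \subsetneq V$.

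The crux is to convert this coalition-level imbalance into an \emph{individual} conforming party with payoff \kUnderwater{}, contradicting the second uniformity condition. I would fix an untriggered arc $(a,b)$ leaving $C$, so that $b \in V \setminus C$ has an untriggered entering arc, and define $B$ to be the set of conforming parties possessing at least one untriggered entering arc; then $b \in B$, so $B \neq \emptyset$. For any $w \in B$, the party $w$ conforms and has an untriggered entering arc, so to avoid \kUnderwater{} it must have \emph{no} triggered leaving arc. I then claim that no arc leaves $B$: any arc $w \to z$ with $w \in B$ is untriggered by the previous sentence; it cannot enter $C$, since every arc entering $C$ is triggered, so $z \in V \setminus C$; and being untriggered it gives $z$ an untriggered entering arc, whence $z \in B$. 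Thus $B$ would be a nonempty proper subset of $V$ with no outgoing arc, which is impossible in a strongly connected digraph, giving the required contradiction.

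I expect this last step to be the main obstacle. The naive ``accounting'' observation only shows that the complement $V \setminus C$ is \emph{collectively} \kUnderwater{}, whereas uniformity forbids \kUnderwater{} only for an individual party, so the coalition-to-individual gap is the real content of the lemma. Bridging it is exactly where strong connectivity is used, through the closure argument showing $B$ has no outgoing arc; and the all-entering-arcs-triggered property of a \kDiscount{} outcome is precisely what keeps the propagation trapped inside $V \setminus C$.
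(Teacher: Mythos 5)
Your argument for the \kDiscount{} case is correct, and your closure-set formulation (the set $B$ of conforming parties with an untriggered entering arc has no outgoing arcs, contradicting strong connectivity) is a clean variant of the paper's explicit path-induction. But your reduction of ``strictly better than \kDeal{}'' to \kDiscount{} alone is a genuine gap. The paper's preference assumptions are deliberately one-sided: parties prefer \kDiscount{} to \kDeal{}, \kDeal{} to \kNodeal{}, and \kFreeride{} to \kNodeal{}; nothing asserts that a coalition cannot prefer a \kFreeride{} outcome to \kDeal{}. Indeed, a coalition that values the full swap only marginally may strictly prefer acquiring one asset for free (while relinquishing nothing) to completing the whole deal. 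Accordingly, the paper's proof allows the deviating coalition's improved payoff to lie ``either in \kFreeride{} or \kDiscount{}'' and refutes both. Your product-order reading (more acquired is better, fewer relinquished is better, nothing else is comparable) renders \kFreeride{} incomparable to \kDeal{} and thereby silently excludes it; that is an extra assumption about preferences, not a consequence of the model, and it is exactly the assumption under which the lemma loses much of its content.

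The omission matters because your closure argument fails precisely in the \kFreeride{} case: you keep $B$ disjoint from $C$ by invoking ``every arc entering $C$ is triggered,'' which is the \kDiscount{} property. In a \kFreeride{} outcome some arcs entering $C$ may be untriggered, so an untriggered arc out of $B$ could enter $C$ and $B$ need not be closed. The repair stays close to your method: in the \kFreeride{} case every arc leaving $C$ is untriggered, so work with $B' = B \cup C$. Every arc leaving $B'$ is untriggered (from $C$ by the \kFreeride{} property, from $B$ by the no-\kUnderwater{} constraint), and its tail is a conforming vertex with an untriggered entering arc, hence lies in $B \subseteq B'$; so no arc leaves $B'$. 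Properness of $B'$ comes from the triggered arc $(v,c)$ entering $C$ guaranteed by \kFreeride{}: its head $v$ is conforming with a triggered leaving arc, so all arcs entering $v$ are triggered, whence $v \notin B'$. This supplementary step is, in essence, the second half of the paper's proof, which propagates triggered-ness backward from $(v,c)$ to exhibit a triggered arc leaving $C$, contradicting the fact that all such arcs are untriggered.
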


\begin{proof}
If a deviating coalition $C \subset V$ achieves a better payoff than \kDeal{},
then that payoff is either in \kFreeride{} or \kDiscount{}.
It follows that some arc that enters $C$ is triggered,
and some arc that leaves $C$ is untriggered.
Moreover, if any arc that enters $C$ is untriggered,
then all arcs that leave $C$ are untriggered.

A conforming party $v \not\in C$ cannot end up $\kUnderwater$,
so if an arc entering $v$ is untriggered,
then every arc leaving $v$ must be untriggered,
and if an arc leaving $v$ is triggered,
then every arc entering $v$ must be triggered.

Let $(c,v)$ be an untriggered arc leaving $C$.
Since $v$ is conforming,
every arc leaving $v$ is untriggered.
Because $\cD$ is strongly connected,
there is a path $(v,v_0),(v_1,v_2),\ldots,(v_k,c_0)$
where each $v_i \not\in C$, and $c_0 \in C$.
By a simple inductive argument,
each arc in this path is untriggered,
so the arc $(v_k,c_0)$ that enters $C$ is untriggered,
so \emph{every} arc leaving $C$ must be untriggered,
and some arc entering $C$ must be triggered.

Let $(v,c)$ be a triggered arc entering $C$.
Since $v$ is conforming,
every arc entering $v$ is triggered.
Because $\cD$ is strongly connected,
there is a path $(c_1,v_0), (v_0, v_1), \ldots, (v_k,c)$
where each $v_i \not\in C$, and $c_1 \in C$.
By a simple inductive argument,
each arc in this path is triggered,
so the arc $(c_1,v_0)$ leaving $C$ is triggered,
contradicting the fact that every arc leaving $C$ is untriggered.
\end{proof}
\begin{lemma}
\lemmalabel{only}
If $\cD$ is not strongly connected,
then no uniform swap protocol is atomic.
\end{lemma}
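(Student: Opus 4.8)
The plan is to show that every uniform protocol $\bbP$ fails to be a strong Nash equilibrium, so that, being uniform, it cannot be atomic. The engine of the proof is a structural observation: since $\cD$ is connected but not strongly connected, there is a nonempty proper ``closed'' set of vertexes on the receiving side of the graph. Concretely, I would pick vertexes $u,w$ with $w$ reachable from $u$ but $u$ not reachable from $w$ (such a pair exists because the condensation is a weakly connected DAG with at least two nodes), and let $R$ be the set of all vertexes reachable from $w$, including $w$. Then $R \neq \emptyset$ since it contains $w$, and $R \neq V$ since it omits $u$; no arc leaves $R$, because an arc out of $R$ would force its tail into $R$; and at least one arc enters $R$, because any path from $u$ to $w$ must cross into $R$. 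Thus $R$ is a ``sink'' set: it receives across the cut but pays nothing across it.

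Next I would consider the \emph{payer} coalition $C = V \setminus R$. Because no arc leaves $R$, no arc enters $C$; because at least one arc enters $R$, at least one arc leaves $C$ into $R$. By the first clause of uniformity, the all-conforming execution triggers every arc, so $C$'s baseline payoff is \kDeal. I then have $C$ deviate by conforming everywhere except that the head of one chosen arc into $R$ declines the final action that would trigger it, leaving that arc untriggered. In the resulting outcome $C$ has all of its (vacuously empty) entering arcs triggered and at least one leaving arc untriggered, i.e.\ payoff \kDiscount. Since each coalition strictly prefers \kDiscount{} to \kDeal, $C$ has improved, so $\bbP$ is not a strong Nash equilibrium. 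Moreover $C$ runs no risk: having no entering arc, $C$ can never finish \kUnderwater, so nothing the conforming parties in $R$ do can harm it, and the deviation is strictly profitable unconditionally.

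I expect the main obstacle to be making the deviation step rigorous at the protocol-independent level rather than leaning on the informal hashed-timelock mechanics. Two points need care. First, I must argue that the head of an arc into $R$ can \emph{unilaterally} suppress that one transfer while the coalition honors its remaining obligations, so the outcome is genuinely \kDiscount{} and not something worse; the clean route is to have $C$ replay its all-conforming behavior and omit only the last triggering action on the chosen arc, using the fact that no arc runs from $R$ to $C$, so $R$ can never withhold an asset that $C$ still needs. Second, there is a definitional edge case when the cut has a single crossing arc: withholding it leaves $C$ with no triggered crossing arc, matching both \kNodeal{} and the literal \kDiscount{} condition. I would resolve this by appealing to the established preference order, since $C$ retains an asset it would otherwise surrender while receiving nothing across the cut either way, so the deviation outcome is strictly preferred to \kDeal{} regardless of its label. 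The remaining pieces---the extraction of $R$ and the payoff bookkeeping for the coalition---are routine.
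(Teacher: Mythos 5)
Your proof is correct and follows essentially the same route as the paper: you take $R$ to be the set of vertexes reachable from $w$ (the paper's $Y$), observe that arcs enter $R$ but none leave it, and have the complementary "payer" coalition improve on \kDeal{} by withholding a crossing transfer, which it can do safely because it has no entering arcs. The only differences are cosmetic---the paper has the coalition withhold \emph{all} arcs into $Y$ and labels the result \kFreeride{} (somewhat loosely, since the coalition has no entering arcs at all), whereas you withhold a single arc and argue via \kDiscount{}; your explicit treatment of the vacuous edge case is, if anything, more careful than the paper's.
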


\begin{proof}
Because $\cD$ is not strongly connected,
it contains vertexes $x,y$ such that $y$ is reachable from $x$,
but not vice-versa.
Let $Y$ be the set of vertexes reachable from $y$,
and $X$ the rest: $X = V \setminus Y$.
$X$ is non-empty because it contains $x$.
Because $y$ is reachable from $x$,
there is at least one arc from $X$ to $Y$,
but no arcs from $Y$ to $X$.

Coalition $X$ can improve its payoff by triggering all
arcs between vertexes in $X$,
but no arcs from $X$ to $Y$,
yielding payoff \kFreeride{} for $X$,
since it triggers strictly fewer arcs leaving $X$,
without affecting any arcs entering $X$.
In fact,
the payoff for each individual vertex in $X$
is either the same or better than \kDeal.
\end{proof}
We have just proved:
\begin{theorem}
\thmlabel{nash}
A uniform swap protocol for $\cD$ is atomic
if and only if $\cD$ is strongly connected.
\end{theorem}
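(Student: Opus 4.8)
The plan is to observe that this biconditional is simply the assembly of the two lemmas that immediately precede it, so the proof requires no new machinery beyond citing them and checking that together they cover both directions of the ``if and only if.''

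First I would prove the reverse direction: suppose $\cD$ is strongly connected and let \bbP{} be any uniform swap protocol. Then \lemmaref{if} applies verbatim and tells us \bbP{} is atomic. This handles ``$\cD$ strongly connected $\Rightarrow$ every uniform protocol for $\cD$ is atomic,'' which is the substantive content of the claim.

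Next I would prove the forward direction by contraposition. Suppose $\cD$ is \emph{not} strongly connected. Then \lemmaref{only} states that no uniform swap protocol for $\cD$ is atomic. Contrapositively, if some uniform swap protocol \bbP{} \emph{is} atomic, then $\cD$ must be strongly connected. Combining the two directions yields the stated equivalence.

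I do not expect any genuine obstacle here, since all the real work was already discharged in proving \lemmaref{if} and \lemmaref{only} (the strong-connectivity path-chasing argument and the $X/Y$ partition argument, respectively). The only point meriting a moment's care is a quantifier subtlety: the theorem is most naturally read as ``\emph{some} uniform protocol for $\cD$ is atomic iff $\cD$ is strongly connected,'' and one should confirm this reading is consistent with both lemmas---\lemmaref{if} gives that \emph{all} uniform protocols are atomic (hence in particular one exists, assuming a uniform protocol exists at all), while \lemmaref{only} rules out \emph{any} atomic uniform protocol. Since ``all are atomic'' implies ``some is atomic,'' and ``none is atomic'' negates ``some is,'' the two lemmas dovetail exactly, and I would note this briefly to make the logical bookkeeping explicit.
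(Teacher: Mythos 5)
Your proposal is correct and matches the paper's own treatment exactly: the paper states the theorem with the remark ``We have just proved,'' i.e., it too obtains the equivalence by combining \lemmaref{if} (strong connectivity implies every uniform protocol is atomic) with \lemmaref{only} (lack of strong connectivity implies no uniform protocol is atomic). Your extra note on the quantifier bookkeeping is a reasonable clarification but introduces nothing beyond what the two lemmas already deliver.
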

Informally, if $\cD$ is not strongly connected,
then rational parties will deviate from any uniform protocol.
In practice, such a swap would never be proposed,
because the parties in $X$ would never agree to a swap with
the free riders in $Y$. 
Henceforth, $\cD$ is assumed strongly connected.

We remark that \thmref{nash} relies on the implicit technical
assumption that all value transfers are explicitly represented on some
blockchain.
This theorem would be falsified, for example,
if Carol responds to learning Alice's secret by sending a
large drone to drop her Cadillac in the middle of Alice's driveway,
without ever recording that transfer in a shared blockchain.
We will assume that if swaps have off-chain consequences,
as they typically do,
that those consequences are explicitly recorded in the form of
blockchain updates.

\section{An Atomic Swap Protocol}
\begin{figure*}[htb]
\begin{lstlisting}[name=contract]
contract Swap {
  Asset asset;               /* asset to be transferred or refunded */`\linelabel{asset}\linelabel{state0}`
  Digraph digraph;           /* swap digraph */`\linelabel{digraph}`
  address[] leaders;         /* leaders */`\linelabel{leaders}`
  address party;             /* transfer asset from */`\linelabel{party}`
  address counterparty;      /* transfer asset to */`\linelabel{counterparty}`
  uint[] timelock;           /* vector of timelocks */`\linelabel{timelock}`
  uint[] hashlock;           /* vector of hashlocks */`\linelabel{hashlock}`
  bool[] unlocked;           /* which hashlocks unlocked? */`\linelabel{unlock}\linelabel{state1}`
  uint   start;              /* protocol starting time */
  /* constructor */
  function Swap (Asset _asset; /* asset to be transferred or refunded */`\linelabel{cons0}`
                 Digraph   _digraph;    /* swap digraph */
                 address[] _leaders;    /* leaders */
                 address   _party;      /* transfer asset from */
                 address   _counterparty; /* transfer asset to */
                 uint[]    _timelock;   /* vector of timelocks */
                 uint[]    _hashlock;   /* vector of hashlocks */
                 uint      _start       /* protocol starting time */
                 ) {
    asset = _asset;                               /* copy */`\linelabel{copy0}`
    party = _party; counterparty = _counterparty; /* copy */
    timelock = _timelock; hashlock = _hashlock;   /* copy */`\linelabel{copy1}`
    unlocked = [false, ..., false];               /* all unlocked */`\linelabel{init}`
  }`\linelabel{cons1}`
\end{lstlisting}
\caption{Swap contract (part one)}
\figlabel{contract0}
\end{figure*}

\begin{figure*}
\begin{lstlisting}[name=contract]
function unlock (int i, uint s, Path path, Sig sig) {`\linelabel{fun0}`
    require (msg.sender == counterparty); /* only from counterparty */`\linelabel{unlockreq}`
    if (now < start + (diam(digraph) + |path|) * `$\Delta$` /* hashkey still valid? */`\linelabel{timeOK}`
        && hashlock[i] == H(s)            /* secret correct? */`\linelabel{secretOK}`
        && isPath(path, digraph, leader[i], counterparty) /* path valid? */`\linelabel{pathOK}`
        && verifySigs(sig, s, path) { /* signatures valid? */`\linelabel{sigOK}`
      unlocked[i] = true;
    }
  }
  function refund () {`\linelabel{refund}`
    require (msg.sender == party); /* only from party */`\linelabel{claimreq}`
    if (any hashlock unlocked and timed out) {
      transfer asset to party;
      halt;
    }
  }
  function claim () {`\linelabel{claim}`
    require (msg.sender == counterparty); /* only from counterparty */
    if (every hashlock unlocked) {
      transfer asset to counterparty;
      halt;
    }
  }`\linelabel{fun1}`
}

\end{lstlisting}
\caption{Swap contract (part two)}
\figlabel{contract1}
\end{figure*}
\subsection{Hashlocks and Hashkeys}
In a simple two-party swap,
each party publishes a contract that assumes temporary control of that
party's asset.
This \emph{hashed timelock contract}~\cite{hashedtimelock}
stores a pair $(h,t)$,
and ensures that if the contract receives the matching secret $s$,
$h = H(s)$, before time $t$ has elapsed,
then the contract is \emph{triggered},
irrevocably transferring ownership of the asset to the counterparty.
If the contract does not receive the matching secret
before time $t$ has elapsed,
then the asset is \emph{refunded} to the original owner.
For multi-party cross-chain swaps,
we will need to extend these notions in several ways.

\begin{figure*}[htb]
  \centering
  \includegraphics[width=0.8 \hsize]{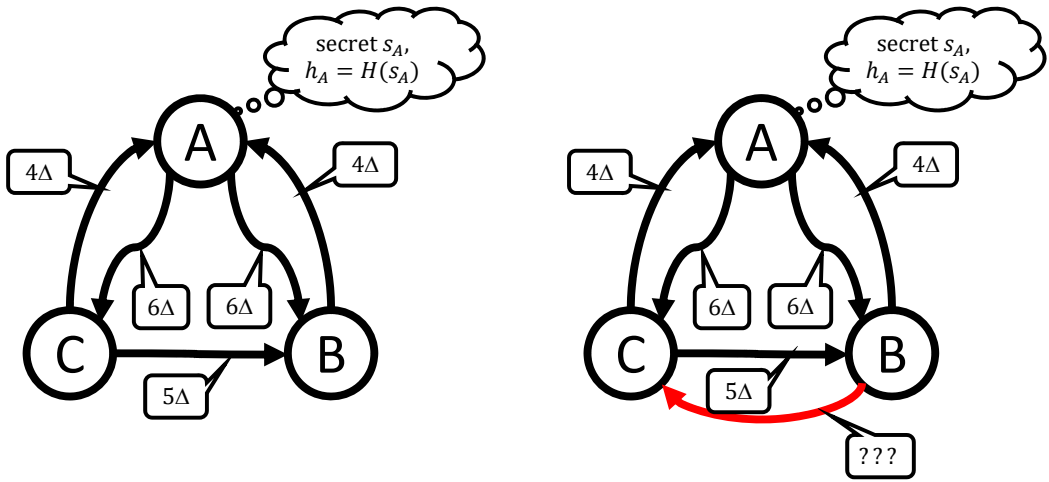}
  \caption{$A$ is the leader, $B$ and $C$ followers.
    Timeouts can be assigned when the follower subdigraph
    is acyclic (left) but not when it is cyclic (right)}
  \figlabel{timeouts}
\end{figure*}

In the three-way swap recounted earlier,
each arc had a single hashlock and a single timeout.
Timeouts were assigned so that the timeout on each arc entering a
follower $v$ was later by at least $\Delta$ than the timeout on each
arc leaving $v$.
This gap ensures that if any arc leaving $v$ is triggered,
then $v$ has time to trigger every entering arc.

If a swap digraph has only one leader, $\hat{v}$,
then the subdigraph of its followers is acyclic.
As in our three-way swap example,
the hashlock on arc $(u,v)$ can be given timeout
$(\diam(\cD) + D(v,\hat{v}) + 1) \cdot \Delta$,
where $D(v,\hat{v})$ is the length of the longest path
from $v$ to the unique leader $\hat{v}$.
(See left-hand side of \Figref{timeouts}.)

This formula does not work if a swap digraph has more than one leader,
because the subdigraph of any leader's followers has a cycle,
and it is not possible to assign timeouts across a cycle
in a way that guarantees a gap of at least $\Delta$ between entering
and leaving arcs.
(See right-hand side of \Figref{timeouts}.)

Instead, for general digraphs,
we must replace timed hashlocks with a more general mechanism,
one that assigns different timeouts to different paths.
Pick a set $L = \set{v_0,\ldots,v_\ell}$ of vertexes,
called \emph{leaders}, forming a feedback vertex set for $\cD$.
Each leader $v_i$ generates a secret $s_i$ and hashlock value $h_i = H(s_i)$,
yielding a \emph{hashlock vector} $(h_0, \ldots, h_\ell)$,
which is assigned to every arc.

\begin{figure*}[htb]
  \includegraphics[width=0.6 \hsize]{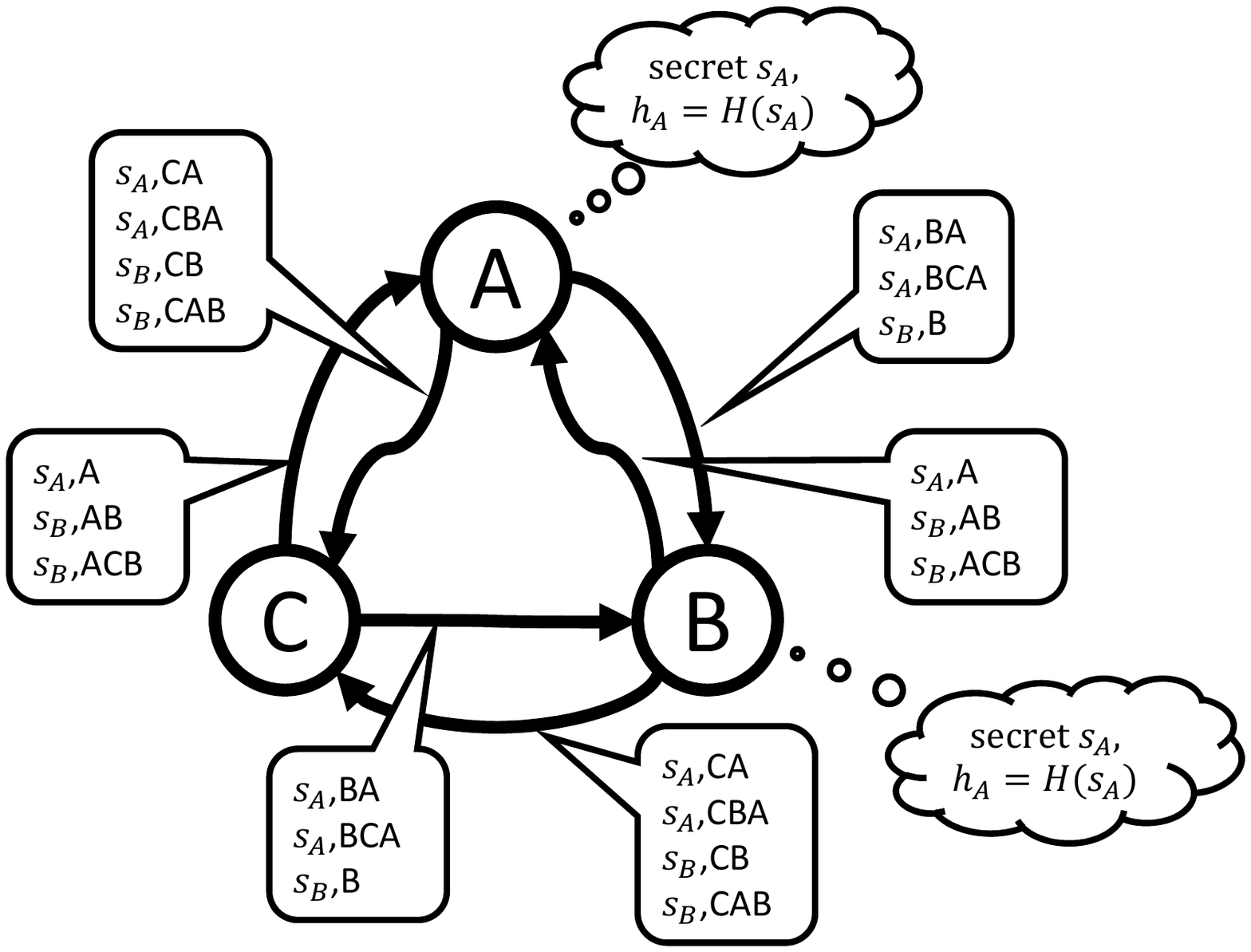}
  \caption{Hashkey paths for arcs of two-leader digraph}
\figlabel{hashkey}
\end{figure*}

\begin{figure*}[htb]
  \includegraphics[width=0.8 \hsize]{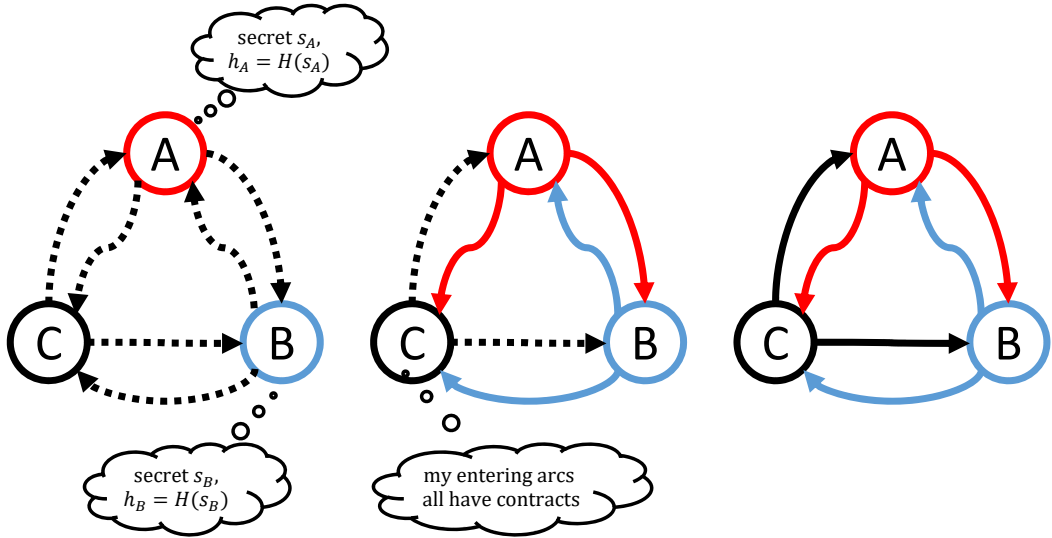}
  \caption{Concurrent contract propagation for two-leader digraph}
\figlabel{propagation}
\end{figure*}
A \emph{hashkey} for $h$ on arc $(u,v)$ is a triple $(s,p,\sigma)$,
where $s$ is the secret $h = H(s)$,
$p$ is a path $(u_0,\ldots,u_k)$ in $\cD$
where $u_0 = v$ and $u_k$ is the leader who generated $s$,
and
\begin{equation*}
\sigma = \sig(\cdots \sig(s, u_k), \ldots, u_0),
\end{equation*}
the result of having each party in the path sign $s$.
A hashkey $(s, p, \sigma)$ \emph{times out} at time
$(\diam(\cD) + |p|) \cdot \Delta$ after the start of the protocol.
That hashkey \emph{unlocks} $h$ on $(u,v)$ if it is presented before
it times out.
An arc is \emph{triggered} when all of its hashlocks are unlocked.
A hashlock has \emph{timed out} on an arc when all of its hashkeys
on that arc have timed out.
\Figref{hashkey} shows partial hashkeys for a two-leader swap digraph.

\subsection{Market Clearing}
For simplicity,
assume the swap digraph is constructed by a (possibly centralized)
market-clearing service,
which perhaps communicates with the parties through its own blockchain.
The clearing service is not a trusted party,
because the parties can check the consistency of the
clearing service's responses.

Each party creates a secret $s$ and matching hashlock $h = H(s)$.
It sends the clearing service its hashlock,
along with an offer characterizing the swaps it is willing to make.
The service combines these offers,
and publishes a swap digraph $\cD = (V,A)$,
a vector $L \subset V$ of \emph{leaders} forming a feedback vertex set,
a vector of those leaders' hashlocks $h_0, \ldots, h_\ell$,
and a \emph{starting time} $T$, at least $\Delta$ in the future.

If all parties conform to the protocol,
all contracts will be triggered before
$T + 2 \cdot \diam(\cD) \cdot \Delta$,
but if some parties deviate,
the conforming parties' assets will be refunded by then.

\subsection{Contracts}
Figures~\nakedfigref{contract0} and~\nakedfigref{contract1}
show pseudocode\footnote{
This pseudocode is based loosely on the popular Solidity
programming language for smart contracts~\cite{solidity}.
}
for a hashed timelock swap contract.
A smart contract resembles an object in an object-oriented programming language,
providing \emph{long-lived state} (\linerefrange{state0}{state1}),
a \emph{constructor} to  initialize that state
(\linerefrange{cons0}{cons1}),
and one or more \emph{functions} to manage that state
(\linerefrange{fun0}{fun1}).

The contract's long-lived state records
the asset to be transferred or refunded (\lineref{asset}),
the digraph $\cD$ (\lineref{digraph}),
the digraph's set of leaders (\lineref{leaders}),
the party transferring the asset (\lineref{party}),
the counterparty receiving the asset (\lineref{counterparty}),
a vector of timelocks (\lineref{timelock}),
a vector of hashlocks (\lineref{hashlock}),
and a Boolean \var{unlocked} vector marking which hashlocks have been
unlocked (\lineref{unlock}).

When the contract is initialized,
its constructor copies the fields provided into the contract's
long-lived state (\linerefrange{copy0}{copy1})
and sets each entry in \var{unlocked} to \emph{false} (\lineref{init}).

The \var{unlock()} function (\lineref{fun0}),
callable only by the counterparty (\lineref{unlockreq}),
takes an index $i$,
a secret $s_i$,
a path $p$,
and the signature \var{sig}.
The hashlock $h_i$ is unlocked if
\begin{itemize}
\item
  the current time is less than
  $T + (\diam(\cD) + |p|) \cdot \Delta$
  (\lineref{timeOK}), 
\item
  $h_i = H(s_i)$
  (\lineref{secretOK}),
\item
  $p$ is a path in $\cD$ from the counterparty to the leader who
  generated $s_i$
  (\lineref{pathOK}), and
\item
  the signature is the result of signing $s_i$ by the parties in $p$
  (\lineref{sigOK})
\end{itemize}

The \var{refund()} function (\lineref{refund}),
callable only by the party,
refunds the asset to the party if any unlocked hashlock has timed out.
The \var{claim()} function (\lineref{claim}),
callable only by the counterparty (\lineref{claimreq}),
transfers the asset to the counterparty if all hashlocks have been unlocked.

\subsection{Pebble Games}
We analyze the protocol using two variations on a simple pebble game.
We are given a strongly-connected digraph $\cD = (V,A)$,
and a vertex feedback set $L \subset V$ of \emph{leaders}.

In the \emph{lazy} pebble game,
start by placing pebbles on the arcs leaving each leader.
Place new pebbles on the arcs leaving vertex $v$ when there is a
pebble on \emph{every} arc entering $v$.

In the \emph{eager} pebble game,
start by placing a single pebble on one vertex $z$.
Place new pebbles on the arcs leaving $v$ when there is a pebble on
\emph{any} arc entering $v$.
Both games continue until no more pebbles can be placed.

\begin{lemma}
\lemmalabel{pebble:termlazy}
In the lazy game,
every arc in $\cD$ eventually has a pebble.
\end{lemma}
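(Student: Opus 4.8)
The plan is to exploit the defining property of the feedback vertex set $L$: deleting the leaders leaves an acyclic digraph, so the followers can be processed in topological order. Write $F = V \setminus L$ for the set of followers. Since $L$ is a feedback vertex set, the subdigraph induced by $F$ is acyclic, and hence admits a topological ordering $f_1, \ldots, f_m$ in which every arc with both endpoints in $F$ runs from a lower-indexed head to a higher-indexed tail. I would first note that pebbles are only ever added, never removed, so it suffices to show that each arc receives a pebble at some finite stage; the terminal configuration then contains all of them.

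Every arc of $\cD$ leaves either a leader or a follower. The arcs leaving leaders are pebbled at the start, by definition of the lazy game. The heart of the argument is to show, by strong induction on $j$, that every arc leaving the follower $f_j$ eventually receives a pebble. For the inductive step, consider the arcs entering $f_j$. Each such arc either leaves a leader -- and is therefore pebbled initially -- or it leaves some follower $f_i$; in the latter case the arc $(f_i,f_j)$ lies inside $F$, so the topological ordering forces $i < j$, and the induction hypothesis says that every arc leaving $f_i$, in particular this one, is already pebbled. Thus at some stage every arc entering $f_j$ carries a pebble, which is exactly the precondition for the lazy rule to fire at $f_j$, placing pebbles on all arcs leaving $f_j$. (A follower with no entering arcs poses no difficulty: the precondition is vacuous and the rule fires immediately, though in a strongly connected $\cD$ every vertex has an entering arc anyway.) Combining the two cases, every arc leaving a follower is eventually pebbled and every arc leaving a leader is pebbled from the outset, so every arc of $\cD$ is eventually pebbled.

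The step I expect to require the most care is lining up the arc orientation with the topological order: the lazy rule fires at a vertex $v$ only when \emph{all} arcs entering $v$ are pebbled, so the induction must be arranged so that every follower feeding into $f_j$ is processed before $f_j$. This is precisely what acyclicity of the subdigraph induced by $F$ guarantees, and it is the only place the feedback-vertex-set hypothesis is used. I would be careful to keep the two kinds of entering arcs separate -- those originating at leaders (handled by the initial placement) versus those originating at earlier followers (handled by the induction hypothesis) -- since conflating them is the easiest way to slip into a circular argument.
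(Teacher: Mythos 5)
Your proof is correct, but it runs in the opposite direction from the paper's. The paper argues by contradiction on the terminal configuration: if some arc $(u,v)$ were pebble-free when the game stops, then some arc entering $u$ must also be pebble-free (else the rule would have fired at $u$), and tracing such arcs backwards through the finite vertex set eventually produces a pebble-free \emph{cycle}; since the leaders form a feedback vertex set, that cycle contains a leader, whose leaving arcs were pebbled in the first step --- a contradiction. You instead use the feedback-vertex-set hypothesis in its definitional form (deleting $L$ leaves the follower subdigraph acyclic), extract a topological order on the followers, and run a direct strong induction along that order. The two uses of the hypothesis are equivalent (``every cycle meets $L$'' versus ``$\cD \setminus L$ is acyclic''), so the mathematical core is the same, but the proof architectures differ: yours is constructive, exhibiting an explicit order in which the lazy rule fires and avoiding both the contradiction and the finiteness-forces-a-cycle step that the paper leaves implicit; the paper's is more compact and needs no topological-order machinery. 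Your version also comes closer to foreshadowing the timing argument of Lemma~\nakedlemmaref{pebble:time}, since the induction depth bounds how long propagation takes, though you would still need to relate that depth to $\diam(\cD)$ to recover the paper's time bound. One small point of care you handled correctly: concluding that the rule actually fires at $f_j$ requires the observation that the game does not stop while a placement is still enabled, which your opening remark about pebbles never being removed, together with the game's stopping condition, supplies.
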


\begin{proof}
Suppose by way of contradiction,
the game stops in a state where an arc $(u,v)$ has no pebble.
There must be a pebble-free arc $(u',u)$ entering $u$,
because otherwise the game would have placed a pebble on $(u,v)$.
Continuing in this way,
build a longer and longer pebble-free path until it becomes a pebble-free cycle.
But leaders form a feedback vertex set,
so every cycle in $\cD$ includes a leader,
and the arcs leaving that leader have pebbles placed in the first step.
\end{proof}

\begin{lemma}
\lemmalabel{pebble:termeager}
In the eager game,
every arc in $\cD$ eventually has a pebble.
\end{lemma}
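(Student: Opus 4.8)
The plan is to mirror the structure of the lazy-game argument used for \lemmaref{pebble:termlazy}, but to exploit the weaker triggering condition of the eager game---\emph{any} entering arc pebbled, rather than \emph{every}---together with the strong connectivity of $\cD$. The key observation is that the eager game does not actually use the feedback vertex set $L$ at all: the single initial pebble on $z$ already suffices to flood the entire digraph, so no acyclicity property of the followers is needed.

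First I would make precise how the initial pebble propagates. Call a vertex $v$ \emph{activated} if either $v = z$ or some arc entering $v$ already carries a pebble. By the eager rule, as soon as $v$ is activated, every arc leaving $v$ eventually receives a pebble. The central claim is then that \emph{every vertex is eventually activated}.

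To prove the claim, fix an arbitrary vertex $v$. Since $\cD$ is strongly connected, there is a path $z = u_0, u_1, \ldots, u_k = v$ from $z$ to $v$. I would argue by induction on $i$ that each $u_i$ becomes activated. The base case $u_0 = z$ holds by definition; for the inductive step, if $u_i$ is activated then, by the eager rule, all arcs leaving it receive pebbles, in particular the arc $(u_i, u_{i+1})$, so $u_{i+1}$ is activated. Hence $v = u_k$ is activated. Finally, every arc $(u,v) \in A$ leaves the vertex $u$, which by the claim is eventually activated, so $(u,v)$ eventually receives a pebble; since $(u,v)$ was arbitrary, every arc is eventually pebbled.

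The main thing to be careful about is the treatment of the initial pebble, which sits on the \emph{vertex} $z$ rather than on an entering arc. The definition of \emph{activated} is chosen precisely so that $z$ serves as a legitimate base case and the induction can start; without this bookkeeping the statement of the eager rule would not formally apply at $z$. Beyond this, the argument is routine, and in contrast to the lazy game the feedback-set hypothesis plays no role.
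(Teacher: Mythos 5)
Your proof is correct and rests on the same key idea as the paper's own proof: strong connectivity supplies a path from $z$ to any vertex, along which pebbles propagate. The paper phrases this as a contradiction argument (if the game stopped with an unpebbled arc, a ``frontier'' arc $(w,w')$ on such a path would still admit a move), whereas you run the same argument directly as an induction along the path; your explicit bookkeeping for the initial pebble sitting on the vertex $z$ rather than on an arc is a point the paper's proof glosses over.
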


\begin{proof}
Suppose by way of contradiction,
the game stops in a state where an arc $(u,v)$ has no pebble.
Because $\cG$ is strongly connected,
there is a path from $z$ to $v$.
Since $z$ has a pebble and $v$ does not,
there is an arc $(w,w')$ on that path where $w$ has a pebble but $w'$ does not,
so $w'$ will get a pebble in the next step,
contradicting the hypothesis that the game has stopped.
\end{proof}

Suppose  there is a worst-case delay $\Delta$ between when the
last pebble is placed on any arc entering $v$,
and when the last pebble is placed on any arc leaving $v$.

\begin{lemma}
\lemmalabel{pebble:time}
In both pebble games,
every arc will have a pebble in time at most $\diam(\cD) \cdot
\Delta$ from when the game started.
\end{lemma}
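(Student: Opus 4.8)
The plan is to assign each vertex $v$ a \emph{firing time} $t(v)$, namely the moment at which pebbles first appear on the arcs leaving $v$, and to show $t(v) \le \diam(\cD) \cdot \Delta$ for every $v$. Since every arc leaves some (head) vertex and receives its pebble exactly when that vertex fires, bounding all firing times immediately bounds the time by which every arc is pebbled. The termination lemmas (\lemmaref{pebble:termlazy} and \lemmaref{pebble:termeager}) already guarantee these times are finite; the work here is to bound them.

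I would dispose of the eager game first, since it is the routine case. Set $t(z) = 0$ for the seed vertex. Because a vertex fires as soon as \emph{any} entering arc carries a pebble, the firing times satisfy $t(v) \le \min_{(u,v) \in A} t(u) + \Delta$. This is precisely the relaxation underlying a breadth-first search rooted at $z$, so $t(v) \le d(z,v) \cdot \Delta$, where $d(z,v)$ is the shortest-path distance from $z$ to $v$. A shortest path repeats no vertex, so its length is at most the length of the longest simple path in $\cD$, which is $\diam(\cD)$; hence $t(v) \le \diam(\cD)\cdot\Delta$.

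The lazy game is the main obstacle, because a vertex now depends on \emph{every} entering arc, not just one. For each leader $\ell$ I set $t(\ell)=0$ (its leaving arcs are pebbled in the first step), and for each follower $v$ I use the recurrence $t(v) = \max_{(u,v)\in A} t(u) + \Delta$. The key structural fact is that this recurrence is well-founded: since the leaders form a feedback vertex set, deleting them leaves the follower subdigraph acyclic, so the recurrence can be evaluated in topological order of that follower DAG. Along this order I would prove by induction that for every $v$ there is a \emph{simple} path $P_v$ in $\cD$ from some leader to $v$ with $t(v) \le |P_v| \cdot \Delta$. The inductive step takes $u^*$ to be the entering neighbour achieving the maximum and appends $v$ to the simple path $P_{u^*}$ supplied by the hypothesis. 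The only thing to verify is that the extension stays simple, i.e.\ that $v \notin P_{u^*}$, and this is exactly where the feedback-vertex-set hypothesis does its work: $P_{u^*}$ is a single leader followed by followers, so if the follower $v$ already lay on $P_{u^*}$, then the sub-path from $v$ to $u^*$ together with the arc $(u^*,v)$ would close a cycle among followers, contradicting acyclicity. Hence $P_v$ is simple, $|P_v| = |P_{u^*}|+1$, and $t(v) \le |P_v|\cdot\Delta \le \diam(\cD)\cdot\Delta$, since every simple path has length at most $\diam(\cD)$.

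Combining the two cases, every vertex fires by time $\diam(\cD)\cdot\Delta$, and since each arc is pebbled at the firing time of its head, every arc carries a pebble within $\diam(\cD)\cdot\Delta$ of the start. I expect the eager argument to be essentially a one-line BFS observation; the genuine difficulty is confined to the lazy game, and specifically to showing that the worst-case chain of \emph{all}-predecessor dependencies is a simple path of bounded length — precisely the guarantee that the feedback-vertex-set assumption provides.
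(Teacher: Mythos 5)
Your proof is correct, but it runs in the opposite direction from the paper's. For the lazy game the paper uses a shrinking-potential step: at any moment take a pebble-free path of maximal length; it cannot be a cycle (a cycle would contain a leader, whose leaving arcs were pebbled at the first step), so every arc entering its head already carries a pebble, and hence within $\Delta$ the path loses its first arc --- after $\diam(\cD)$ intervals no pebble-free arc can remain. You instead run the induction forward: a firing-time recurrence evaluated in topological order of the (acyclic) follower subdigraph, with the invariant that each firing time is witnessed by a simple leader-to-$v$ path $P_v$. Both arguments bottom out in exactly the same combinatorial fact --- the feedback vertex set kills every cycle among followers, so dependency chains are simple paths of length at most $\diam(\cD)$ --- but your version is more explicit: the recurrence is well-founded on its own, so your argument actually subsumes \lemmaref{pebble:termlazy} rather than relying on it, and it produces a concrete certificate (the path $P_v$) for each vertex's bound, at the cost of more machinery. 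For the eager game your shortest-path/BFS bound is essentially the paper's observation that along any path from $z$ the number of unpebbled vertexes drops by one per interval of $\Delta$, merely rephrased as the distance bound $t(v) \le d(z,v)\cdot\Delta \le \diam(\cD)\cdot\Delta$.
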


\begin{proof}
For the lazy game,
it is enough to show that in each interval of time $\Delta$,
the longest pebble-free path shrinks by one.
At any time after the first step,
let $a_0,\ldots,a_k$ be a pebble-free path of maximal length.
That path cannot be a cycle,
because then it would include a leader,
who would have placed a pebble on $a_0$ in the first step.
It follows that every arc entering the head of $a_0$ must have a
pebble,
because otherwise we could construct a longer pebble-free path.
By hypothesis,
within time $\Delta$, $a_0$ will have a pebble,
and the path will have shrunk by one.

For the eager game,
it is enough to observe that in each interval of time $\Delta$,
for every vertex $v$,
the number of unpebbled vertexes in every path from $z$ to $w$ shrinks by one.
Because $\cD$ is strongly connected,
such a path always exists.
\end{proof}

\begin{corollary}
\corlabel{pebble}
Under the stated timing assumptions,
for both games,
every arc has a pebble within time $\diam(\cD) \cdot \Delta$.  
\end{corollary}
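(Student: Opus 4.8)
The plan is to read this corollary as a direct consequence of the three preceding lemmas, so the proof should be bookkeeping rather than new argument. First I would recall that \lemmaref{pebble:termlazy} and \lemmaref{pebble:termeager} establish that, in the lazy and eager games respectively, every arc of $\cD$ eventually carries a pebble; these lemmas dispose of the qualitative claim that the process never stalls with some arc permanently uncovered, relying respectively on the feedback-vertex-set property and on strong connectivity. The remaining work is purely to attach a time bound to this guarantee.

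Next I would invoke \lemmaref{pebble:time}, whose proof shows that under the stated worst-case-delay assumption the longest pebble-free path strictly shrinks in each interval of length $\Delta$. The key quantitative step is to bound the initial length of that path: since any path in $\cD$ has length at most $\diam(\cD)$ by definition of the diameter, after at most $\diam(\cD)$ intervals of length $\Delta$ no pebble-free path can remain, and hence no pebble-free arc. Combining the eventual-coverage statement with this shrinking rate yields exactly the uniform bound $\diam(\cD) \cdot \Delta$ asserted by the corollary, for both games simultaneously.

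The one point requiring a little care — the closest thing to an obstacle — is folding the two games into a single timing statement, since their pebbling rules differ (``every entering arc'' versus ``any entering arc''), and \lemmaref{pebble:time} phrases the shrinking argument differently for each: in terms of pebble-free arcs for the lazy game, and in terms of unpebbled vertexes along a path from the seed $z$ for the eager game. I would make explicit that in both formulations the relevant length is bounded above by $\diam(\cD)$, so the single constant $\diam(\cD) \cdot \Delta$ suffices, and I would note that the termination lemmas are what guarantee the process actually reaches the fully-pebbled state rather than merely shrinking the witnessing path in principle.
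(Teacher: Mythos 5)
Your proposal is correct and matches the paper's (implicit) argument: the corollary is stated without proof precisely because it is the direct combination of Lemma~\nakedlemmaref{pebble:termlazy}, Lemma~\nakedlemmaref{pebble:termeager}, and Lemma~\nakedlemmaref{pebble:time}, which is exactly the bookkeeping you carry out. Your added care in checking that both games' shrinking arguments are bounded by $\diam(\cD)$ is a faithful elaboration of what the paper leaves to the reader, not a different route.
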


\subsection{The Protocol}
There are two phases.
In Phase One,
instances of the \var{Swap} contract
(Figures~\nakedfigref{contract0} and~\nakedfigref{contract1})
are propagated through $\cD$, starting at the leaders.
Each time a party observes that a contract has been published on an entering arc,
it verifies that contract is a correct swap contract,
and abandons the protocol otherwise.

Here is the Phase-One protocol for leaders:
\begin{enumerate}
\item
Publish a contract on every arc leaving the leader, then
\item
wait until contracts have been published on all arcs entering the leader.
\end{enumerate}
Here is the protocol for followers:
\begin{enumerate}
\item
wait until correct contracts have been published on all arcs entering
the vertex, then

\item
publish a contract on every arc leaving the vertex.
\end{enumerate}
\Figref{hashkey} shows how contracts are propagated in a swap digraph with two leaders.

In Phase Two, the parties disseminate secrets via hashkeys.
While contracts propagate in the direction of the arcs,
from party to counterparty,
hashkeys propagate in the opposite direction,
from counterparty to party.
Informally, each party is motivated to trigger the contracts
on entering arcs to acquire the assets controlled by those contracts.

We now trace how the secret $s_i$ generated by leader $v_i$ is
propagated.
At the start of the phase,
$v_i$ calls \var{unlock}$(s_i, v_i, \sig(s_i,v_i))$ at each entering arc's
contract
(here, the function's arguments are the hashkey, and $v_i$ is a degenerate path).
The first time any other party $v$ observes that hashlock $h_i$ on a leaving
arc's contract has been unlocked by a call to \var{unlock}$(s_i, p, \sigma)$,
it calls \var{unlock}$(s_i, v + p, \sig(\sigma,v))$ at each entering arc's contract.
The propagation of $s_i$ is complete when
$h_i$ has either timed out or has been unlocked on all arcs.

\begin{lemma}
\thmlabel{phase1}
If all parties conform to the protocol,
then every arc has a contract within time
$\diam(\cD) \cdot \Delta$ of when the protocol started.
\end{lemma}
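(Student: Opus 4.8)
The plan is to recognize Phase One of the protocol as a concrete instance of the lazy pebble game and then transfer the timing bound already proved for that game. First I would fix the correspondence: put a pebble on arc $(u,v)$ exactly when a correct contract has been published on that arc. Under this identification, the leaders' Phase-One rule---publish a contract on every leaving arc at the start---matches the lazy game's initialization, which places pebbles on every arc leaving a leader; and the followers' rule---publish on every leaving arc once correct contracts appear on all entering arcs---matches the lazy game's propagation rule, which pebbles the leaving arcs of $v$ once every arc entering $v$ carries a pebble. Because all parties conform, every published contract is a correct swap contract, so each party's verification step always succeeds and no party abandons the protocol; thus contract propagation advances in lockstep with the lazy game.

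Next I would check that the pebble game's timing hypothesis holds in the protocol. The timing model guarantees that $\Delta$ is long enough for one party to publish a contract and for another to confirm it. Hence, once a follower $v$ has confirmed contracts on all of its entering arcs, it publishes---and those publications are confirmed on---all of its leaving arcs within time $\Delta$. This is exactly the hypothesis of \lemmaref{pebble:time}: the worst-case delay between the last pebble arriving on an arc entering $v$ and the last pebble being placed on an arc leaving $v$ is at most $\Delta$.

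With the correspondence and the timing hypothesis in hand, the conclusion follows from the pebble lemmas. \lemmaref{pebble:termlazy} guarantees that every arc eventually carries a pebble, hence a published contract, and \lemmaref{pebble:time} (equivalently \corref{pebble}) bounds the time: every arc is pebbled within $\diam(\cD) \cdot \Delta$ of when the game---that is, the protocol---started. This gives the stated bound.

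I expect the main obstacle to be stating the correspondence precisely enough that the timing bound transfers without a gap, particularly at the boundary between the leaders' initial step and the first $\Delta$-interval of follower propagation. The leaders' initial contracts must be treated as the game's starting configuration---the pebbles placed ``in the first step'' of \lemmaref{pebble:time}---so that the shrinking-longest-pebble-free-path argument applies from that configuration onward. Once that alignment is made explicit, the remainder is a direct appeal to the already-proved pebble lemmas.
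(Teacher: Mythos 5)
Your proposal is correct and takes exactly the same route as the paper, which proves this lemma by observing that Phase One is an instance of the lazy pebble game and citing Lemmas~\nakedlemmaref{pebble:termlazy} and~\nakedlemmaref{pebble:time}. Your write-up simply makes explicit the correspondence (contract published $=$ pebble placed) and the timing hypothesis that the paper leaves implicit.
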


\begin{proof}
  Phase One is an instance of the lazy pebble game on $\cD$,
  so the claim follows from
  Lemmas~\nakedlemmaref{pebble:termlazy} and~\nakedlemmaref{pebble:time}.
\end{proof}

\begin{lemma}
\thmlabel{phase2}
If all parties conform to the protocol,
then every arc's contract is triggered within time
$2 \cdot \diam(\cD) \cdot \Delta$ of when the protocol started.
\end{lemma}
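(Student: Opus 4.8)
The plan is to mirror the structure of the Phase-One argument: just as Phase One was identified with the \emph{lazy} pebble game, I would show that Phase Two is an instance of the \emph{eager} pebble game, run once per leader, and then invoke the timing bound already established for that game. The crucial observation is that hashkeys travel \emph{against} the arcs (from counterparty to party), and that a party rebroadcasts secret $s_i$ to \emph{all} of its entering arcs as soon as it sees $h_i$ unlocked on \emph{any one} leaving arc. Reading ``unlock $h_i$ on an arc'' as ``place a pebble on that arc,'' this is exactly the eager rule---pebble every arc leaving a vertex once any arc entering it is pebbled---but with the roles of entering and leaving arcs interchanged, i.e.\ the eager game played on the transpose $\cD^T$, seeded at the single vertex $v_i$. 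Since $\cD$ is strongly connected, so is $\cD^T$, and reversing arcs preserves path lengths, so $\diam(\cD^T) = \diam(\cD)$.

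With that correspondence in hand, I would argue in three steps. First, by Lemma~\nakedthmref{phase1}, every arc carries a valid contract by time $T + \diam(\cD)\cdot\Delta$, so each leader $v_i$ can begin disseminating $s_i$ on its entering arcs by then. Second, applying \lemmaref{pebble:termeager}, \lemmaref{pebble:time}, and \corref{pebble} to the eager game on $\cD^T$, the secret $s_i$ unlocks $h_i$ on every arc within $\diam(\cD)\cdot\Delta$ of when $v_i$ starts. Running this concurrently for all leaders, every entry of the hashlock vector is unlocked on every arc, so by definition every arc is \emph{triggered}, and the whole of Phase Two finishes by $T + 2\cdot\diam(\cD)\cdot\Delta$.

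The step I expect to require the most care is checking that each \var{unlock} call actually beats its own timeout, since an arc is triggered only if its hashkeys are presented \emph{before} they expire. Here I would track the path length: because each party prepends itself exactly once (it acts only on the first unlock it observes), the hashkey reaching a given arc is a \emph{simple} path $p$ whose length equals the number of propagation steps from the leader, so the secret arrives at that arc at time at most $(\text{Phase Two start}) + |p|\cdot\Delta \le T + (\diam(\cD) + |p|)\cdot\Delta$. This is precisely the timeout assigned to a hashkey with path $p$, which is the whole point of making the timeout path-dependent: the extra $|p|\cdot\Delta$ budget exactly absorbs the propagation delay, so longer paths receive correspondingly more time. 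I would also confirm that the prepended path is a genuine path in $\cD$ from the counterparty to $v_i$, so the check on \lineref{pathOK} passes---again a consequence of the ``first observation only'' rule, which prevents any vertex from repeating. The one delicate point is the boundary case where Phase One finishes exactly at $T+\diam(\cD)\cdot\Delta$; I would handle it by noting that Lemma~\nakedthmref{phase1} gives completion strictly within that bound, preserving the strict inequality $\var{now} < {}$timeout.
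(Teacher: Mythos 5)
Your proposal is correct and follows exactly the paper's approach: the paper's (two-sentence) proof likewise identifies each secret's dissemination with the eager pebble game on the transpose digraph $\cD^T$, run in parallel for all leaders, and invokes the pebble-game bounds. Your additional bookkeeping---that a hashkey with path $p$ arrives by $T + (\diam(\cD)+|p|)\cdot\Delta$ and hence beats its own path-dependent timeout---is detail the paper leaves implicit, not a different argument.
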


\begin{proof}
Each secret's dissemination is an instance of the eager pebble games
on $\cD^T$, the transpose digraph.
The secrets are disseminated in parallel.
\end{proof}

\begin{theorem}
\thmlabel{time}
If all parties conform to the protocol,
then every contract is triggered within time
$2 \cdot \diam(\cD) \cdot \Delta$ of when the protocol started.
\end{theorem}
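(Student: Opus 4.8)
The plan is to read \thmref{time} off directly from the two phase lemmas, which between them already account for the full running time. \thmref{phase1} handles deployment: Phase One is exactly the lazy pebble game on $\cD$, identifying a pebble on an arc with a published contract, so \lemmaref{pebble:termlazy} guarantees every arc eventually carries a contract and \lemmaref{pebble:time} places the last one by time $\diam(\cD) \cdot \Delta$. \thmref{phase2} handles triggering: the dissemination of each leader's secret $s_i$ is an eager pebble game on the transpose $\cD^T$, where a pebble on the $\cD^T$-arc corresponding to $(u,v)$ records that hashlock $h_i$ has been unlocked on $(u,v)$. Since reversing arcs preserves path lengths we have $\diam(\cD^T) = \diam(\cD)$, so \lemmaref{pebble:termeager} and \lemmaref{pebble:time} complete each such game within $\diam(\cD) \cdot \Delta$ of its start.

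First I would argue that the two phases compose additively. Deployment finishes by $\diam(\cD) \cdot \Delta$; the secret games then run over the already-deployed contracts, each starting no later than that deadline and adding at most another $\diam(\cD) \cdot \Delta$, for a total of $2 \cdot \diam(\cD) \cdot \Delta$. Because the $|L|$ secret games share a single underlying digraph and run in parallel, no summation over leaders is required: each is bounded by its own copy of \corref{pebble}, and the bound on their maximum is still one copy. An arc is triggered precisely when all of its hashlocks have been unlocked, i.e.\ when every secret game has pebbled it, and since all of them meet the common deadline, every arc is triggered by $2 \cdot \diam(\cD) \cdot \Delta$.

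The step I expect to be the crux is checking that the timeouts never cut a live dissemination short. A hashkey carrying path $p$ expires at $(\diam(\cD) + |p|) \cdot \Delta$, so I must verify that the secret actually reaches each arc strictly before that instant. The structural point is that $|p|$ grows by exactly one with each hop the secret travels back toward its leader, matching the per-hop $\Delta$ charged by \lemmaref{pebble:time}, while the leading $\diam(\cD) \cdot \Delta$ term absorbs the Phase-One deployment delay guaranteed by \thmref{phase1}. Confirming that these two budgets line up along every dissemination path — that the extra allowance $|p| \cdot \Delta$ granted to a long hashkey path is never overspent once the deployment slack is subtracted — is the one piece of genuine bookkeeping; once it is in place, the theorem is an immediate appeal to \thmref{phase1} and \thmref{phase2}.
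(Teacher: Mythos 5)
Your proposal is correct and follows essentially the same route as the paper, which obtains \thmref{time} immediately from Lemmas~\nakedthmref{phase1} and~\nakedthmref{phase2}: Phase One is the lazy pebble game on $\cD$ (finishing by $\diam(\cD)\cdot\Delta$), and each secret's dissemination is an eager pebble game on the transpose $\cD^T$, run in parallel with the others, adding at most $\diam(\cD)\cdot\Delta$ more. If anything you are more thorough than the paper, which never spells out the timeout bookkeeping; your check that the $|p|\cdot\Delta$ allowance grows by $\Delta$ per hop while the leading $\diam(\cD)\cdot\Delta$ term absorbs the Phase-One delay makes explicit what the paper leaves implicit.
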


The deadline $2 \cdot \diam(\cD) \cdot \Delta$ bounds the time assets can be held
in escrow when things go wrong.
In practice,
one would expect actual running times to be shorter.

There is a simple optimization that ensures that Phase Two
completes in constant time when all parties conform to the protocol.
We use a shared blockchain,
perhaps that of the market-clearing service,
as a broadcast medium.
Each leader $v_i$ publishes its secret $s_i$ on the shared blockchain,
and each follower monitors that blockchain,
triggering its entering arcs when it learns the secret.
(Logically, we create an arc from each follower directly to that leader.)
Unfortunately, while this broadcasting blockchain can
``short-circuit'' the Phase Two protocol,
it cannot replace it,
because a deviating leader might refrain from publishing the secret
on that blockchain, but publish it on others.
(Miller \emph{et al.}~\cite{MillerBKM17} propose a similar optimization
for the Lightning network.)

\begin{lemma}
\lemmalabel{hashkey}
If hashlock $h$ times out on any arc entering a conforming $v$,
then $h$ must have timed out on every arc leaving $v$.
\end{lemma}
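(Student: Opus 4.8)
The plan is to establish the \emph{contrapositive}. Writing $P$ for the event ``$h$ times out on some arc entering $v$'' and $Q$ for ``$h$ times out on every arc leaving $v$,'' the lemma is $P \Rightarrow Q$, and $\neg Q$ says precisely that $h$ is \emph{unlocked} (before its deadline) on at least one arc leaving $v$. So it suffices to show that whenever $h$ is unlocked on some leaving arc of the conforming vertex $v$, the vertex $v$ unlocks $h$ on \emph{every} one of its entering arcs before their deadlines, i.e.\ $\neg P$.

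Let $\hat v$ be the leader that generated $s$ with $h = H(s)$. Recall that a hashkey for $h$ on an arc carries a path $p$ running from that arc's tail (its counterparty) to $\hat v$, that it unlocks only before time $(\diam(\cD) + |p|)\cdot\Delta$ after the start, and that only the counterparty may invoke \texttt{unlock}. First I would fix the \emph{earliest} moment $t^\ast$ at which $h$ is unlocked on any arc leaving $v$, say on arc $(v,w)$ via a hashkey with path $p$; unlocking requires $t^\ast < (\diam(\cD) + |p|)\cdot\Delta$.

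The core step is a single-hop timing argument. Because $v$ is conforming, once it observes $h$ unlocked on $(v,w)$ it forms the extended hashkey with path $v+p$ (prepending and signing) and invokes \texttt{unlock} on each arc entering $v$. The timing model allots at most $\Delta$ to observe and respond, so $v$ acts by $t^\ast + \Delta < (\diam(\cD) + |p| + 1)\cdot\Delta$, which, since $|v+p| = |p|+1$, is exactly the deadline of the extended hashkey. The same hashkey is valid for all entering arcs (each has counterparty $v$ and the path ends at $\hat v$), so $v$ unlocks $h$ on all of them in time. The decisive bookkeeping is that lengthening the path by one is compensated exactly by the $\Delta$ of one propagation hop; this is why the timeout is calibrated as $\diam(\cD) + |p|$.

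I expect the main obstacle to be verifying that $v+p$ is a legal path, that is, that $v \notin p$, since otherwise prepending $v$ repeats a vertex. Here I would exploit the minimality of $t^\ast$ together with signature causality: if $v$ lay on $p$, then $v$'s signature would already appear inside the hashkey presented at $t^\ast$; but a conforming $v$ produces that signature only \emph{after} it observes $h$ unlocked on the leaving arc that immediately follows $v$ in $p$, and that event cannot occur before $t^\ast$. This contradicts the hashkey existing at $t^\ast$, forcing $v\notin p$ and legitimizing the extension. With $v\notin p$ settled, no arc entering $v$ times out, so $\neg P$ holds and the contrapositive gives the lemma.
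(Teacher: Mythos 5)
Your proof is correct and follows essentially the same route as the paper's: argue via the contrapositive, prepend $v$ to the presented hashkey's path, and observe that the extra hop in the deadline $(\diam(\cD)+|p|+1)\cdot\Delta$ exactly absorbs the $\Delta$ that conforming $v$ needs to observe and respond, so the extended hashkey unlocks every entering arc in time. The only difference is cosmetic, in the case $v \in p$: the paper concludes directly that $v$ must already have received, signed, and used a hashkey triggering $h$ on its entering arcs at that earlier moment, whereas you use the same signing-causality observation to contradict the minimality of $t^\ast$ and rule the case out.
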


\begin{proof}
  Suppose $h$ was triggered on
  $(v,w)$ by hashkey $(s,p,\sigma)$.
  If $v$ does not appear in $p$,
  then $v + p$ is a path from $v$ to the leader,
  and $v$ can immediately trigger $h$ on $(u,v)$ using the
  hashkey $(s, v+p, \sig(\sigma,v))$, which has not timed out.
  If $v$ appears in $p$,
  then $v$ has already received (and signed) a hashkey that triggers $h$ on $(u,v)$.
\end{proof}

\begin{theorem}
  \thmlabel{main}
No conforming party ends up \kUnderwater{}.
\end{theorem}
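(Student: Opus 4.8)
The plan is to prove the statement by contradiction, reducing it to a single application of \lemmaref{hashkey}. Suppose some conforming party $v$ finishes \kUnderwater{}: by definition there is an arc $(u,v)$ entering $v$ that is untriggered and an arc $(v,w)$ leaving $v$ that is triggered. The first step is to convert the untriggered entering arc into a timed-out hashlock. The protocol reaches a quiescent final state by time $2\cdot\diam(\cD)\cdot\Delta$, since every hashkey $(s,p,\sigma)$ expires at $(\diam(\cD)+|p|)\cdot\Delta$ and every path satisfies $|p|\le\diam(\cD)$; hence in the final outcome every hashlock on a published contract is either unlocked or has timed out. Assuming for now that $(u,v)$ carries a published contract --- a point I return to at the end --- the fact that an arc is triggered only when all its hashlocks are unlocked forces $(u,v)$ to carry some hashlock $h$ that has timed out on it. I fix this one witnessing $h$ and track it, since \lemmaref{hashkey} is a per-hashlock statement and must be applied to a single hashlock rather than to triggering status in aggregate.

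With $h$ in hand the contradiction is immediate. Because $v$ is conforming and $h$ has timed out on the entering arc $(u,v)$, \lemmaref{hashkey} forces $h$ to have timed out on \emph{every} arc leaving $v$, and in particular on $(v,w)$. Once all of $h$'s hashkeys on $(v,w)$ have expired, the timeout guard inside \var{unlock()} rejects every further attempt, so $h$ can never be unlocked on $(v,w)$; therefore not all of $(v,w)$'s hashlocks are unlocked, and $(v,w)$ is \emph{not} triggered. This contradicts the assumption that $(v,w)$ is triggered, so no conforming $v$ can finish \kUnderwater{}.

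The one delicate point --- and the step I expect to require the most care --- is the deferred claim that the untriggered arc $(u,v)$ carries a published contract, without which ``timed-out hashlock'' is meaningless and \lemmaref{hashkey} has nothing to relay. The way to discharge it is to observe that a \emph{triggered} leaving arc $(v,w)$ already forces every arc entering $v$ to carry a valid contract. The contract on $(v,w)$ escrows $v$'s own asset, so $v$ must have published it, and a conforming $v$ does so only in accordance with the protocol. If $v$ is a follower, it publishes on any leaving arc only after confirming correct contracts on all entering arcs, so $(u,v)$ is published. If $v$ is a leader, then $(v,w)$ carries $v$'s own hashlock, which can be unlocked only after $v$ reveals its secret; a conforming leader reveals its secret only once its Phase-One wait for contracts on all entering arcs has completed, so again $(u,v)$ is published. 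Either way the reduction to a timed-out hashlock is justified, and the \lemmaref{hashkey} argument goes through unchanged.
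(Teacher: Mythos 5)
Your proposal is correct and follows essentially the same route as the paper's own proof: the same contradiction setup, the same leader/follower case analysis to show the untriggered entering arc $(u,v)$ must nevertheless carry a contract, the same reduction of ``untriggered forever'' to a timed-out hashlock, and the same single application of \lemmaref{hashkey} to transfer that timeout to the triggered leaving arc $(v,w)$. The only differences are cosmetic --- you defer the contract-existence step to the end and spell out the quiescence bound $2\cdot\diam(\cD)\cdot\Delta$ that the paper leaves implicit.
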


\begin{proof}
Assume by way of contradiction that some conforming party $v$ ends up \kUnderwater{}:
a leaving arc $(v,w)$ has a triggered contract,
but an entering arc $(u,v)$ does not and will not.
  
First, arc $(u,v)$ must have a contract.
Suppose $v$ is a leader.
Since $(v,w)$ has been triggered,
$v$ has revealed its secret through a hashkey.
But a leader issues hashkeys in Phase Two
only after a contract has been published on every entering arc during Phase One.
Suppose instead $v$ is a follower.
Since $(v,w)$ has been triggered,
one of the arcs leaving $v$ has a contract.
But in Phase One, a follower publishes a contract on a leaving arc only after
contracts have been published on all of its incoming arcs.

Since $(u,v)$ has a contract,
one of that arc's hashlocks must have timed out.
By \lemmaref{hashkey},
the arc $(v,w)$ must also have timed out,
a contradiction.
\end{proof}

\begin{theorem}
\thmlabel{comm}
For $\cD = (V,A)$ with leaders $L \subset V$,
the space complexity,
measured as the number of bits stored on all blockchains,
is $O(|A|^2)$.
\end{theorem}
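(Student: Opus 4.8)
The plan is to bound the total number of ledger bits as a sum over the two things the protocol ever writes: the long-lived state of the deployed swap contracts, and the hashkey data submitted to \texttt{unlock} during Phase Two. Since Phase One deploys exactly one contract per arc, there are $|A|$ contracts in all, so it suffices to bound the state of a single contract and the data that lands on a single arc, then multiply by $|A|$.

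First I would bound the state of one contract. The dominant field is the stored copy of the swap digraph $\cD$, which \texttt{unlock} needs so that \texttt{isPath} can verify a submitted path; representing an arbitrary digraph requires $\Theta(|A|)$ bits for its arc set, and $|V| \le |A|$ since strong connectivity forces out-degree at least one at every vertex, so the vertex set costs no more. The remaining vector fields — the hashlock, timelock, and \texttt{unlocked} vectors together with the list of leaders — each have length $|L| \le |V| \le |A|$, and the scalar fields (asset handle, party, counterparty, start time) are $O(1)$ each. Hence each contract occupies $O(|A|)$ bits, and summing over the $|A|$ arcs gives $O(|A|^2)$. I would note that the digraph copies alone already force $\Theta(|A|^2)$, so for this implementation the bound is tight, not merely an upper estimate.

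The main obstacle is the accounting for the Phase-Two hashkeys, because a hashkey $(s,p,\sigma)$ carries a path $p$ together with a nested signature $\sigma$, both of size $\Theta(|p|)$, and a naive bound $|p|\le\diam(\cD)$ with up to $|L|$ hashkeys per arc yields only $O(|A|\cdot|L|\cdot\diam(\cD))$, which can exceed $|A|^2$. The clean resolution I would use is that \texttt{unlock} retains none of its arguments in long-lived state: it inspects the hashkey and then records only a single bit, flipping one entry of \texttt{unlocked}. Thus the \emph{persistent} contract state grows by $O(1)$ per successful unlock, at most $O(|L|)$ per arc and $O(|A|\cdot|L|)=O(|A|^2)$ overall, which is already subsumed by the digraph term; the full hashkeys are transient transaction inputs rather than stored state. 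I would make this distinction explicit at the outset, since it is precisely the line between the $O(|A|^2)$ space bound claimed here and the separately reported communication cost, and it is the one place where the meaning of ``bits stored on all blockchains'' must be pinned down.
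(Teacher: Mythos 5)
Your proposal is correct and takes essentially the same route as the paper, whose entire proof is the single observation that there are $|A|$ contracts, one per arc, each storing a copy of $\cD$ at $O(|A|)$ bits. Your extra bookkeeping---the $O(|L|)$ vector fields, the tightness remark, and especially the point that \var{unlock()} persists only one bit of \var{unlocked} so the Phase-Two hashkeys count toward communication rather than stored state---fills in details the paper's one-line proof leaves implicit, but does not change the argument.
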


\begin{proof}
  There are $|A|$ contracts, one on each arc,
  each with a copy of the digraph $\cD$,
  which requires $O(|A|)$ storage.
\end{proof}

Finally, any atomic cross-chain swap protocol using hashed timelocks
must assign secrets to a feedback vertex set.
\begin{lemma}
\lemmalabel{publish}
In any uniform hashed timelock swap protocol,
no follower $v$ can publish a contract on an arc leaving $v$
before contracts have been published on all arcs entering $v$.
\end{lemma}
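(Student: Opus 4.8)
The plan is to argue by contradiction, exhibiting a deviation that drives the conforming follower $v$ into outcome \kUnderwater{}, which uniformity of \bbP{} forbids. Suppose some uniform hashed timelock protocol \bbP{} does permit a follower $v$ to publish a contract on a leaving arc $(v,w)$ at a moment when some entering arc $(u,v)$ still carries no contract. Consider first the execution in which every party conforms to \bbP{}: by hypothesis it reaches, at some time $t$, a state in which the contract on $(v,w)$ has been published while $(u,v)$ is still bare. Since all parties conform, this run eventually triggers every arc, and in particular it triggers $(v,w)$ by having the counterparty $w$ present valid hashkeys for each of the leaders' hashlocks before their timeouts. I will record this successful unlocking transcript for $(v,w)$ and then show it survives a deviation that permanently starves $(u,v)$.

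\textbf{Key steps.} Next I would perturb the all-conform run by the deviation in which the head $u$ of $(u,v)$ simply never publishes the contract on $(u,v)$, while the remaining parties other than $v$ act exactly as they did before in order to trigger $(v,w)$. The crucial observation is that the two executions are indistinguishable to $v$ up to time $t$: in the conforming run $u$ publishes $(u,v)$ only strictly after $t$, so before $t$ the arc $(u,v)$ is bare in \emph{both}. Hence $v$, following \bbP{}, still publishes the irrevocable contract on $(v,w)$ at time $t$, after which the sole difference between the runs is the permanent absence of a contract on $(u,v)$. I would then argue that the hashkeys $w$ uses to unlock $(v,w)$ are produced by the Phase-Two propagation of the leaders' secrets flowing \emph{backward} from $w$ toward the leaders, a process that neither publishes nor triggers $(u,v)$ and does not wait on it; the same secrets are revealed by the same (conforming) leaders, so the recorded unlocking transcript remains available and $(v,w)$ is triggered as before. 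Because published contracts are irrevocable and $(u,v)$ never acquires one, $(u,v)$ can never be triggered. Thus $v$ ends with a triggered leaving arc and a forever-untriggered entering arc, i.e.\ a conforming party finishing \kUnderwater{}, contradicting uniformity. Note that $v$'s status as a \emph{follower} is what rules out an escape: lacking a secret of its own, $v$ cannot unlock anything on $(u,v)$ on its own — and in any case there is no contract there to unlock.

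\textbf{Main obstacle.} The step I expect to be the hard part is justifying that the deviating coalition can indeed force $(v,w)$ to be triggered independently of $(u,v)$ and of any further cooperation by $v$. The worry is a hashkey whose signed path from $w$ to some leader routes back through $v$ — possible since $\cD$ is strongly connected and hence cyclic — which would require $v$'s signature $\sig(\sigma,v)$, or worse would require $v$ to present a hashkey on the withheld arc $(u,v)$ itself. I would dispose of this by replaying the all-conform transcript rather than inventing fresh hashkeys, and by exploiting two freedoms: first, the choice of which bare entering arc to starve, and second, strong connectivity of $\cD$, which I would use to route $w$'s hashkey paths around the single suppressed arc whenever possible. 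Where a signature of $v$ is genuinely unavoidable, I would observe that $v$ generates $\sig(\sigma,v)$ upon seeing a secret on one of its \emph{leaving} arcs, an event in no way conditioned on the entering arc $(u,v)$ possessing a contract, so that signature is produced exactly as in the conforming run. Folding $u$ together with the parties that supply $w$'s hashkeys into one coalition then makes the argument go through, yielding the claimed \kUnderwater{} outcome for the conforming follower $v$.
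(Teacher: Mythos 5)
Your proposal is correct and takes essentially the same route as the paper: the paper's entire proof is the single observation that the parties other than $v$ can collude to trigger the contract on $(v,w)$ while refusing to publish any contract on $(u,v)$, leaving the conforming follower $v$ \kUnderwater{} and contradicting uniformity. Your indistinguishability-up-to-time-$t$ and transcript-replay elaborations, including the worry about hashkey paths routing through $v$, are a more careful working-out of exactly that coalition attack, which the paper states in two lines without addressing those subtleties.
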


\begin{proof}
If follower $v$ has has a contract on arc $(v,w)$ but no contract on arc $(u,v)$,
then the parties other than $v$ could collude to trigger the contract on $(v,w)$,
while refusing to publish a contract on $(u,v)$,
leaving $v$ \kUnderwater.
\end{proof}

\begin{theorem}
In any uniform swap protocol based on hashed timelocks,
the set $L$ of leaders is a feedback vertex set in $\cD$.
\end{theorem}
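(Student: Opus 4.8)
The plan is to prove the content directly: I will show that the set of followers $V \setminus L$ induces no directed cycle, which is exactly the assertion that $L$ is a feedback vertex set. So I would assume, for contradiction, that deleting $L$ leaves a directed cycle $C$ all of whose vertexes $u_0, u_1, \ldots, u_{k}=u_0$ are followers, and then reason about the all-conforming execution whose successful completion is guaranteed by the first clause of uniformity. The goal is to exhibit a circular dependency among the arcs of $C$ that prevents any of them from ever carrying a contract, and then to contradict the fact that a conforming execution must trigger every arc.

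The engine of the argument is \lemmaref{publish}. Writing $a_i = (u_i, u_{i+1})$ for the arcs of $C$, each $a_i$ leaves the follower $u_i$, while the arc $a_{i-1} = (u_{i-1},u_i)$ enters $u_i$. By \lemmaref{publish}, the follower $u_i$ cannot publish a contract on its leaving arc $a_i$ until contracts have appeared on all arcs entering $u_i$, and in particular on $a_{i-1}$. Thus each $a_i$ ``waits for'' its cyclic predecessor $a_{i-1}$, and these dependencies close up around the cycle.

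Next I would convert this circular wait into a contradiction by a minimality argument over publication times. Let $S$ be the set of arcs of $C$ that ever receive a published contract in the execution, and suppose $S \neq \emptyset$; let $a_i \in S$ be published earliest. Since $u_i$ is a follower that published $a_i$, \lemmaref{publish} forces a contract to appear on $a_{i-1}$ strictly before $u_i$ acts (the $\Delta$-timing model requires $u_i$ first to observe the entering contracts and only then to publish), so $a_{i-1} \in S$ and is published strictly earlier than $a_i$, contradicting minimality. Hence $S = \emptyset$: no arc of $C$ ever carries a contract. But in an all-conforming execution the first clause of uniformity says every party finishes \kDeal, so every arc is triggered; since a hashed-timelock transfer can be triggered only through a published contract, every arc --- in particular every arc of the nonempty cycle $C$ --- must carry a contract, contradicting $S = \emptyset$. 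Therefore no such follower cycle exists, and $L$ is a feedback vertex set.

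I expect the main obstacle to be the strictness in the ordering step: I must argue that the predecessor arc's contract appears \emph{strictly before}, not merely no later than, the follower's own publication, since otherwise the ``earliest'' arc of $S$ need not be well defined and the minimality argument could stall. This is precisely where the $\Delta$-timing assumption (observe, then act) does the real work, so I would make the operational reading of \lemmaref{publish} explicit to justify it. The only other point needing care is the implicit link between ``triggered'' and ``carries a published contract,'' which rests on the standing assumption that all asset transfers are realized on-chain through contracts.
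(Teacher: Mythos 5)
Your proof is correct and takes essentially the same route as the paper's: both arguments are driven by \lemmaref{publish}, which sets up a circular publication dependency around a hypothetical all-follower cycle, conclude that no arc of that cycle can ever carry a contract, and contradict the completion that uniformity guarantees in an all-conforming execution. The only difference is bookkeeping --- the paper formalizes the deadlock with a \emph{waits-for} digraph invariant (no vertex on a cycle of unpublished arcs ever attains indegree zero), while you use an earliest-publication minimality argument, and you make explicit two points the paper leaves implicit: the strict precedence of entering-arc publications (justified by the observe-then-act $\Delta$-timing model) and the final appeal to the first clause of uniformity.
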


\begin{proof}
Suppose, instead,
there is a uniform swap protocol where the leaders
do not form a vertex feedback set.

At any step in the protocol,
the \emph{waits-for} digraph $W$ is the subdigraph of $\cD^T$ where
$(v,u)$ is an arc of $W$ if $(u,v)$ has no published contract.
Informally,
\lemmaref{publish} implies that $v$ must be waiting for $u$ to publish
a contract on $(u,v)$ before $u$ can publish any contracts on its own
outgoing arcs.
In the initial state, if $\cD \setminus L$ contains a cycle, so does $W$.
At each protocol step,
a follower $v$ can publish a contract on a leaving arc only if $v$
has indegree zero in the current waits-for digraph.
But no vertex on a cycle in the waits-for digraph will ever have indegree zero,
a contradiction.
\end{proof}

\subsection{Single-Leader Digraphs}
As noted, in the common special case where a swap digraph needs only
one leader,
we can replace hashkeys with simple timeouts,
reducing message sizes and eliminating the need for digital
signatures.
In the following, let $\cD$ be a swap digraph with a single leader
$\hat{v}$ with hashlock $h$.
\begin{lemma}
  \lemmalabel{simple}
  If each arc $(u,v)$ has timeout
  $(\diam(\cD) + D(v,\hat{v}) + 1) \cdot \Delta$,
  then for every conforming $v \neq \hat{v}$,
  the timeout on each arc $(u,v)$ is later by at least $\Delta$ than
  the timeout on each arc $(v,w)$.
\end{lemma}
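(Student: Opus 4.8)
The plan is to reduce the statement to a purely combinatorial inequality about longest-path lengths, and then exploit acyclicity of the follower subdigraph. Writing out the formula, an arc $(u,v)$ entering $v$ carries timeout $(\diam(\cD) + D(v,\hat{v}) + 1)\cdot\Delta$, which depends only on its tail $v$, while an arc $(v,w)$ leaving $v$ carries timeout $(\diam(\cD) + D(w,\hat{v}) + 1)\cdot\Delta$, depending only on its tail $w$. Subtracting these and cancelling the common term $(\diam(\cD)+1)\cdot\Delta$, the claimed gap of at least $\Delta$ is equivalent to the inequality $D(v,\hat{v}) \ge D(w,\hat{v}) + 1$ holding for every arc $(v,w)$ with $v \ne \hat{v}$. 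So it suffices to prove this inequality.

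The key structural observation is that, because the single leader $\hat{v}$ is a feedback vertex set, the follower subdigraph $\cD \setminus \hat{v}$ is acyclic; equivalently, every cycle of $\cD$ passes through $\hat{v}$. Consequently I treat $\hat{v}$ as a sink when measuring distance to it, so that $D(\hat{v},\hat{v}) = 0$ and every longest path terminating at $\hat{v}$ has all of its remaining vertexes among the followers. This is the reading consistent with the single-leader timeout assignment, where the timeouts decrease by exactly $\Delta$ as arcs get closer (in the longest-path sense) to the leader.

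For the main argument, fix an arc $(v,w)$ with $v \ne \hat{v}$ and let $P = (w = x_0, \ldots, x_m = \hat{v})$ be a longest path from $w$ to $\hat{v}$, so $m = D(w,\hat{v})$. I would first dispatch the degenerate case $w = \hat{v}$: there $(v,\hat{v})$ is itself a path of length $1$, while $D(\hat{v},\hat{v}) = 0$, so $D(v,\hat{v}) \ge 1 = D(\hat{v},\hat{v}) + 1$. Otherwise $w$ is a follower, and the crux is to show that $v$ does not already occur on $P$. The hard part, and the only place the hypotheses really bite, is exactly this: if $v = x_j$ for some $j$, then since $v \ne w$ forces $j \ge 1$, the prefix $(x_0, \ldots, x_j)$ together with the arc $(v,w) = (x_j, x_0)$ forms a genuine cycle lying entirely inside the follower subdigraph (it never touches $\hat{v}$, which appears only as the endpoint $x_m$), contradicting acyclicity. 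Hence $v \notin P$, the concatenation $v + P$ is a legitimate path from $v$ to $\hat{v}$ of length $m+1$, and therefore $D(v,\hat{v}) \ge D(w,\hat{v}) + 1$, which is what was needed. Substituting back into the timeout formula yields the required $\Delta$ gap between every arc entering $v$ and every arc leaving $v$.
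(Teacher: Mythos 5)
Your proof is correct and takes essentially the same route as the paper's: pick a longest path $p$ from $w$ to the leader $\hat{v}$, use acyclicity of the follower subdigraph to argue that $v + p$ is a genuine path, and conclude $D(v,\hat{v}) \geq D(w,\hat{v}) + 1$, which gives the $\Delta$ gap. You simply spell out details the paper leaves implicit---the reduction of the timeout comparison to the longest-path inequality, the degenerate case $w = \hat{v}$, and the explicit cycle contradiction showing $v$ cannot lie on $p$---which strengthens rather than changes the argument.
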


\begin{proof}
  Let $p$ be the longest path from $w$ to the leader $\hat{v}$.
  Because the subdigraph of followers is acyclic,
  $v + p$ is a path of length $D(w,\hat{v})+1$ from $v$ to
  $\hat{v}$, so $D(v,\hat{v}) \geq D(w, \hat{v})+1$.
\end{proof}

\begin{lemma}
  \lemmalabel{timeout}
  For a single-leader digraph using timeouts,
  if hashlock $h$ times out on any arc entering a conforming $v$,
  then $h$ must have timed out on every arc leaving $v$.
\end{lemma}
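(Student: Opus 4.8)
This is the single-leader, timeout-based counterpart of \lemmaref{hashkey}, so the plan is to follow the same strategy, substituting the path-based hashkey timeouts used there with the timeout gap guaranteed by \lemmaref{simple}. As in \lemmaref{hashkey}, I would argue the contrapositive: if $h$ fails to time out on some arc \emph{leaving} $v$ --- that is, $h$ is unlocked on that arc --- then $h$ fails to time out on \emph{every} arc entering $v$. Unwinding the definitions, ``$h$ timed out on an arc'' means $h$ was never unlocked there before the arc's timeout, so the negation is precisely ``$h$ was unlocked.'' Thus the contrapositive reads: if $h$ is unlocked on some leaving arc, then $h$ is unlocked on every entering arc.

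First I would fix an arbitrary leaving arc $(v,w)$ on which $h$ is unlocked. In the single-leader scheme a hashlock is unlocked only by presenting the secret $s$ with $h = H(s)$ before the arc's timeout, so this unlock reveals $s$ on the blockchain carrying $(v,w)$ strictly before that arc's timeout $(\diam(\cD) + D(w,\hat{v}) + 1)\cdot\Delta$. Since $v$ is conforming (and is the party, i.e.\ the head, on $(v,w)$), it observes the revealed $s$ within $\Delta$ and, following the Phase-Two protocol, presents $s$ to unlock $h$ on each of its entering arcs $(u,v)$, on which $v$ is the counterparty. The crux is then to invoke \lemmaref{simple}: every entering arc $(u,v)$ times out at least $\Delta$ later than $(v,w)$, so the window opened by the revelation of $s$ is still open when $v$ publishes its unlock on $(u,v)$, whence $h$ is unlocked there. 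Because $(u,v)$ was arbitrary, $h$ is unlocked on every entering arc, which is exactly the contrapositive.

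The step I expect to be most delicate is the timing bookkeeping: I must verify that the single $\Delta$ gap supplied by \lemmaref{simple} simultaneously absorbs the delay for $v$ to confirm the revealed $s$ and to publish its own unlock on $(u,v)$ before that arc's deadline. This is precisely the round-trip budget that $\Delta$ is defined to cover, and it is the same accounting that underlies the one-step gap in \lemmaref{hashkey}, so I would lean on that parallel rather than re-derive it. Finally, I would dispose of the degenerate case $v = \hat{v}$ separately, since \lemmaref{simple} states its gap only for followers: the leader holds $s$ from the start of Phase Two and so can unlock every entering arc carrying a contract well before it times out, making the hypothesis vacuous for a conforming leader and leaving the follower case above as the substantive one.
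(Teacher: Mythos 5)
Your proposal is correct and takes essentially the same route as the paper: the paper's one-line proof likewise rests on the $\Delta$ gap of \lemmaref{simple} (which it cites, via an evident typo, as \lemmaref{timeout} itself), concluding that once $h$ is triggered on a leaving arc $(v,w)$, the conforming $v$ has at least $\Delta$ to trigger each entering arc $(u,v)$. Your write-up merely makes explicit the contrapositive bookkeeping and the degenerate leader case that the paper leaves implicit.
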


\begin{proof}
  By \lemmaref{timeout},
  once $h$ is triggered on $(v,w)$,
  $v$ has time at least $\Delta$ to trigger $(u,v)$.
\end{proof}

From this point on,
the bounds on running time and proofs of safety for the
single-leader-using-timeouts protocol are essentially the same as for
the general protocol.

\section{Remarks}
We have seen that single-leader swap digraphs do not require hashkeys
and digital signatures, only timeouts.
Is there a way to reduce the use of digital signatures in the general case?

Finding a minimal feedback vertex set for $\cD$ is NP-complete~\cite{Karp72},
although there exists an efficient 2-approximation~\cite{BeckerG1996}.

The protocol is easily extended to a model where there
may be more than one arc from one vertex to another,
so-called \emph{directed multi-graphs}~\cite{bang2001digraphs},
reflecting the situation where Alice wants to transfer
assets on distinct blockchains to Bob.

The swap protocol is still vulnerable to a weak denial-of-service
attack where an adversarial party repeatedly proposes an attractive
swap, and then fails to complete the protocol,
triggering refunds, but temporarily rendering assets inaccessible.
We leave for future work the question whether one could require
parties to post bonds, and following a failed swap.
examine the blockchains to determine who was at fault
(by failing to execute an enabled transition).

An interesting open problem is the extent to which this swap
protocol can be modified to provide better privacy,
analogous to the way the Bolt network~\cite{bolt} improves on Lightning.

As noted,
some parties may be willing to accept certain \kUnderwater{} outcomes
rejected by the swap protocol presented here.
Future work might investigate protocols where parties are endowed with
customized objective functions to provide finer-grained control which
outcomes are acceptable.

The swap protocol can be made recurrent
by having the leaders distribute the next round's hashlocks in Phase
Two of the previous round.
If swaps are recurrent,
then it would be useful to conduct swaps \emph{off-chain} as much as possible,
similar to the way that Lightning~\cite{lightning} and
Raiden~\cite{raiden} networks support off-chain transactions for
bitcoin and ERC20 tokens.

One limitation of the swap protocol presented here is the
assumption that the swap digraph, its leaders, and their hashlocks are
common knowledge among the participants.
Future work might address constructing and propagating this
information dynamically.

\section{Related Work}
The use of hashed timelock contracts for two-party cross-chain swaps
is believed to have emerged from an on-line discussion forum in
2016~\cite{bitcoinwiki,tiersnolan}.
There is open-source code~\cite{bip199,decred,barterdex} for two-party
cross-chain swap protocols between selected currencies,
and proposals for applications using swaps~\cite{Catalyst}.

Off-chain payment networks~\cite{DeckerW2015,raiden,bolt,lightning} circumvent the
scalability limits of existing blockchains by conducting multiple
transactions off the blockchain,
eventually resolving final balances through
a single on-chain transaction.
The Revive network~\cite{KhalilG2017} rebalances off-chain networks in
a way that ensures that compliant parties do not end up worse off.
These algorithms also use hashed timelock contracts,
but they address a different set of problems.

Multi-party swaps arise when matching kidney donors and recipients.
A transplant recipient with an incompatible donor can swap donors to
ensure that each recipient obtains a compatible organ.
A number of algorithms~\cite{AbrahamBS2007,DickersonMPST16,Jia2017}
have been proposed for matching donors and recipients.
Shapley and Scarf~\cite{ShapleyS1974} consider the circumstances under
which certain kinds of swap markets have strong equilibriums.
Kaplan~\cite{Kaplan2011} describes a polynomial-time algorithm that
given a set of proposed swaps,
constructs a swap digraph if one exists.
These papers and many others focus on ``the clearing problem'',
roughly analogous to constructing a swap digraph,
but not on how to execute those swaps on blockchains.

The \emph{fair exchange} problem~\cite{FranklinT1998,Micali2003} is a
precursor to the atomic cross-chain swap problem.
Alice has a digital asset Bob wants, and vice-versa,
and at the end of the protocol,
either Alice and Bob have exchanged assets,
or they both keep their assets.
In the absence of blockchains,
trusted, or semi-trusted third parties are required,
but roles of those trusted parties can be minimized in clever ways.

A \emph{atomic cross-chain transaction} is a distributed task where a
\emph{sequence} of exchanges occurs at each blockchain.
An atomic cross-chain swap is an atomic cross-chain transaction,
but not vice-versa,
because not all transactions can be expressed as swaps.
In our original example,
Alice could not borrow bitcoins from Bob to pay Carol,
because then Alice would have to execute two steps in sequence
(borrow, then spend) instead of executing a single swap.
A better understanding of atomic cross-chain transactions is the
subject of future work.

\bibliographystyle{abbrv}
\bibliography{blockchain}

\end{document}